\documentclass[12pt,a4paper]{article}
\usepackage{amsmath,amssymb,amsfonts,amsthm}
\newtheorem{theorem}{Theorem}[section]
\newtheorem{lemma}[theorem]{Lemma}

\newtheorem{corollary}[theorem]{Corollary}
\newtheorem{conjecture}[theorem]{Conjecture}
\usepackage{t1enc}
\usepackage[english]{babel}
\usepackage[latin1]{inputenc}
\usepackage{color}

\usepackage{graphicx}
%\renewcommand{\thefootnote}{\fnsymbol{footnote}}

%Mass functional
\newcommand{\mf}{\mathcal{M}}

% Volume element in S^2

% surface  element in S^2

% Laplacian in standard S^2

\title{Horizon area--angular momentum inequality for a class of axially
  symmetric black holes}

\author{Andr\'es Ace\~na$^{2}$ \and Sergio Dain$^{1,2}$ \and Mar\'\i a
  E. Gabach Cl\'ement$^{1}$\\ 
  \\
  $^1$Facultad de Matem\'atica, Astronom\'{i}a y F\'{i}sica, FaMAF,\\
  Universidad Nacional de C\'ordoba,\\
  Instituto de F\'{\i}sica Enrique Gaviola, IFEG, CONICET,\\
  Ciudad Universitaria, (5000) C\'ordoba, Argentina.  \\
  $^{2}$Max Planck Institute for Gravitational Physics,\\
  (Albert Einstein Institute), Am M\"uhlenberg 1,\\
  D-14476 Potsdam Germany. }

\begin{document}
\maketitle

\begin{abstract}
  We prove an inequality between horizon area and angular momentum for a class
  of axially symmetric black holes. This class includes initial conditions with
  an isometry which leaves fixed a two-surface. These initial conditions have
  been extensively used in the numerical evolution of rotating black
  holes. They can describe highly distorted black holes, not necessarily near
  equilibrium. We also prove the inequality on extreme throat initial data,
  extending previous results.
\end{abstract}

\section{Introduction}
\label{sec:introduction}
In a recent article \cite{dain10d} the following conjecture was formulated:
\begin{conjecture}
\label{c:1}
Consider an asymptotically flat, vacuum, complete axially symmetric initial
data set for Einstein equations. Then the following inequality holds
\begin{equation}
  \label{eq:1}
 8\pi |J|\leq  A,
\end{equation}
where $A$ and $J$ are the area and angular momentum of a connected component of
the apparent horizon.
\end{conjecture}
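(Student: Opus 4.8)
The strategy is to reduce the global inequality~\eqref{eq:1} to a quasi-local inequality on the horizon alone, and then to establish that quasi-local inequality by a variational argument whose extremal configuration is the extreme Kerr throat. The hypothesis on the class of data is exactly what makes the first reduction possible: if an isometry of the data fixes a two-surface $\Sigma$, then $\Sigma$ is totally geodesic in the slice, hence a minimal surface; for the data in the class it is a connected component of the apparent horizon, with area $A$, and moreover the isometry constrains the restriction of the extrinsic curvature $K_{ij}$ of the slice to $\Sigma$ so strongly that, once the axial symmetry is used, the only data left on $\Sigma$ that matter are its induced metric $q$ and the twist potential $\omega$ carrying the angular momentum $J=\frac{1}{8\pi}\int_\Sigma K_{ij}\eta^i\,dS^j$ of the axial Killing field $\eta$ --- essentially the geometry of the bifurcation surface of a stationary axisymmetric black hole. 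Since $J$ is conserved in axial symmetry, it then suffices to prove $8\pi|J|\le A$ from the geometry induced on $\Sigma\cong S^2$ together with the vacuum Hamiltonian constraint, which on a maximal slice (as holds for the data in the class) forces the scalar curvature $R$ of the slice to be non-negative.

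The second step uses the axial symmetry to collapse the problem on $\Sigma$ to one dimension. Writing the induced metric in the form
\begin{equation*}
 q = e^{2c-\sigma}\,d\theta^2 + e^{\sigma}\sin^2\theta\,d\varphi^2, \qquad \theta\in[0,\pi],
\end{equation*}
so that $A=2\pi\int_0^{\pi}e^{c}\sin\theta\,d\theta$, and noting that $J$ equals, up to a fixed constant, the variation of $\omega$ between the two poles, the stability of $\Sigma$ as a minimal surface --- combined with the Gauss equation relating the intrinsic and extrinsic geometry of $\Sigma$ and with $R\ge 0$ --- allows one to eliminate $c$ in favour of $(\sigma,\omega)$ and to recast the area bound as a purely variational statement: the inequality $8\pi|J|\le A$ becomes the statement that a reduced functional $\mathcal{M}(\sigma,\omega)$ satisfies $\mathcal{M}(\sigma,\omega)\ge\log(8\pi|J|)$ for every admissible axisymmetric pair $(\sigma,\omega)$ on $S^2$ carrying angular momentum $J$. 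Here $\mathcal{M}$ is the sum of the Dirichlet energy of $\sigma$, an angular-momentum term built from $|\partial\omega|^2$ weighted by negative powers of $\eta=e^{\sigma}\sin^2\theta$ --- the energy density of a map into the hyperbolic plane $\mathbb{H}^2$ --- and a term linear in $\sigma$ produced by the Gauss--Bonnet bookkeeping.

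The final step is the variational problem itself: minimise $\mathcal{M}$ over the admissible pairs at fixed $J$, show that the infimum is attained by the data of the extreme Kerr throat sphere, and that its value is exactly $\log(8\pi|J|)$; this gives $A\ge 8\pi|J|$, with equality only for the extreme Kerr throat. This is where the techniques developed for the mass--angular-momentum inequality enter: the Euler--Lagrange equations of $\mathcal{M}$ form a harmonic-map system into $\mathbb{H}^2$ with the extreme Kerr throat as the relevant solution, and the difference $\mathcal{M}(\sigma,\omega)-\mathcal{M}_{\mathrm{Kerr}}$ is controlled by a convexity estimate along the $\mathbb{H}^2$-geodesic joining the two configurations. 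The same circle of ideas, applied directly to the cylindrical end, yields the companion statement for extreme throat initial data.

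The step I expect to be the main obstacle is this minimisation. The functional $\mathcal{M}$ is not coercive in any naive norm: the term linear in $\sigma$ drives $\mathcal{M}$ to $-\infty$ along large negative constants, and is kept in check only by the angular-momentum term (so that $J=0$ must be handled separately, as the trivial case); the negative powers of $\sin\theta$ in the angular-momentum term degenerate at the poles $\theta=0,\pi$ and obstruct the direct method unless one works in a weighted space adapted to the regularity of the symmetry axis; and the non-compactness of the target $\mathbb{H}^2$ forces one to rule out energy escaping to the ideal boundary, i.e.\ to rule out $\eta\to 0$ in the interior. Establishing existence, uniqueness and --- crucially --- the identification of the minimiser with extreme Kerr will require either the geodesic convexity of $\mathcal{M}$ or a delicate a~priori analysis of the harmonic-map system with its boundary singularities. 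A secondary, more geometric difficulty is to verify, for the concrete data in the class, that the isometry-fixed surface $\Sigma$ is outermost, so that it is genuinely a connected component of the apparent horizon, and to handle configurations that contain several such surfaces.
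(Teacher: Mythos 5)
Your outline reproduces the architecture of the paper's partial proof almost exactly: reduce to an isometry-fixed (hence totally geodesic, minimal) coordinate sphere $\Sigma$, bound its area from below by an exponential of a mass functional $\mathcal{M}(\varsigma,\omega)$ containing the Dirichlet energy, a term linear in $\varsigma$, and the $\mathbb{H}^2$-energy density $|\partial\omega|^2/\eta^2$, and then show that the infimum of $\mathcal{M}$ at fixed $J$ is attained at the extreme Kerr throat with value $8(\ln(2|J|)+1)$, which after exponentiation gives exactly $A\geq 8\pi|J|$. Where you diverge is in how the two key lemmas are obtained. For the area--$\mathcal{M}$ inequality the paper does not invoke the Gauss equation or the stability operator of $\Sigma$: it writes the three-dimensional scalar curvature explicitly in the adapted coordinates \eqref{metric}, combines $R\geq\frac{1}{2}|\partial\omega|^2\eta^{-2}e^{-\sigma-2q}$ (from the maximal Hamiltonian constraint with $\Lambda\geq 0$) with that formula, integrates over the unit sphere, and passes from $\int(\varsigma+q)$ to $\int e^{\varsigma+q}$ by Jensen's inequality; the hypotheses $\chi=0$, $\partial_r\chi\geq 0$, $\partial_r q=0$ enter only to control the signs of the leftover radial-derivative integrals $F$ and $G$. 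Your stability-plus-Gauss route is viable and in principle frees you from the $r=\mathrm{constant}$ restriction, but you have not carried it out and it is genuinely harder. For the minimization the paper does not use geodesic convexity in $\mathbb{H}^2$; it uses the Hildebrandt--Kaul--Widman existence and uniqueness theorem for the Dirichlet problem into a negatively curved target, together with a logarithmic cutoff (\`a la Li--Tian) interpolating between extreme Kerr near the poles and the given data away from them, precisely to sidestep the pole degeneracy you flag; the hypothesis $\partial_\theta\omega=0$ at the poles is what makes the cutoff error terms vanish. Two corrections of substance: first, the paper does not assume (and in the cited examples it is false) that the isometry-fixed surface coincides with the apparent horizon --- it only needs $\Sigma$ to be a \emph{global} area minimizer, so that $A(\Sigma)\leq A(\mathrm{horizon})$ while $J$ is unchanged, and it explicitly concedes that the conjecture is not obtained when extra minimal surfaces of smaller area are present; second, neither you nor the paper proves the conjecture as stated --- the result is confined to maximal data admitting such a coordinate minimal sphere (plus the extreme throat class), and you should state that restriction rather than present the argument as a proof of the general asymptotically flat case.
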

See \cite{dain10d} for a physical interpretation and motivation of the
inequality \eqref{eq:1}. In that article evidences for the validity of
conjecture \ref{c:1} were presented. These evidences are the following: in a
particular class of initial data set, called extreme throat initial data, the
first variation of the area, with fixed angular momentum, is zero and the
second variation is positive definite evaluated at the extreme Kerr throat
initial data. This indicates that the area of the extreme Kerr throat initial
data is a minimum among this class of data.  Since extreme Kerr initial data
satisfy the equality in (\ref{eq:1}) it follows that the area of generic throat
initial data satisfies (\ref{eq:1}). The key ingredient for this analysis is a
formula that relates the variations of the area of extreme throat initial data
with the variation of an appropriately defined mass functional.

However, as it was pointed out in \cite{dain10d}, in order to use these
arguments to prove conjecture \ref{c:1}, there are two main points that need to
be addressed. The first one is the following. It is well known that a
non-negative second variation is a necessary condition for a local minimum but
it is not sufficient.  To prove that extreme Kerr is a local minimum it is
necessary to provide extra estimates in a similar way as in \cite{Dain05d}.  As
remarked in \cite{dain10d} it is expected that the same analysis will
apply to this case also.  However, to prove that extreme Kerr is a global
minimum (which is, of course, what we need to prove) a different ingredient is
needed, since it is a priori not clear how to relate the area and the mass
functional mentioned above far from the extreme Kerr solution.

The second point is how to extend the result on extreme throat initial data to
include the physically relevant asymptotically flat black hole initial data
mentioned in the conjecture.  In \cite{dain10d} a limit procedure was proposed
which could in principle reduce the general case to the extreme throat
case. This limit procedure is similar in spirit to the extreme limit of the
Kerr black hole initial data. However, it is far from clear how to construct
this limit in general. A natural candidate would be initial data which are close
to Kerr. Even for that class of data the construction appears to be difficult.

The purpose of this article is to address the two points mentioned above. For
the first one we give a complete and optimal answer, namely, we prove that
extreme Kerr throat initial data are a global minimum in this class. At the
core of our argument lies a remarkable inequality that relates the area and
the mass functional for extreme throat initial data. This inequality is the
global generalization of the local arguments presented in \cite{dain10d}.

For the second point we give a partial answer. We prove the conjecture for a
class of initial data which has several technical restrictions. However,
despite of that, this class is relevant by itself.  It includes initial data
which have an isometry that leaves fixed a two-dimensional surface. This kind
of data can describe distorted rotating black holes far from equilibrium. They
have been extensively used in numerical simulations \cite{Brandt95}. The well
known Bowen-York family of initial data \cite{Bowen80} is a particular case,
where the metric is conformally flat. Our proof does not rely on a limit
procedure. Remarkable enough, it is a pure local argument, which uses the mass
formula to estimate in a simple fashion the area of the minimal surface.

Finally, we extend the validity of conjecture \ref{c:1} to include a
non-negative cosmological constant (and hence non-asymptotically flat initial
data). This generalization is relevant because there exists a counter-example of inequality (\ref{eq:1}) for
the case of negative cosmological constant, as it was pointed out in \cite{Booth:2007wu}. It will be seen that the inclusion of the cosmological
constant stresses the role of a non-negative Ricci scalar.

The plan of the article is the following. In section \ref{main} we present our
main result, given by theorem \ref{teobh}. We also discuss the scope of the
theorem and analyze relevant examples. The proof of this result consists of two
main parts, explained in sections \ref{secarea} and \ref{MassMinimum}.

The main result of section \ref{secarea} is an estimate for the area in terms
of the mass functional. In section \ref{MassMinimum} we present a variational
argument that asserts that the global minimum of the mass functional is given
by the extreme throat Kerr initial data.

\section{Main Result}\label{main}

An initial data set for Einstein equations, with cosmological constant
$\Lambda$, consists in a Riemannian 3-manifold $S$, together with its first and
second fundamental forms, $h_{ij}$ and $K_{ij}$ respectively, which satisfy the
vacuum Einstein constraints on $S$
\begin{align}
\label{hamiltonianConst}
R+K^2-K_{ij}K^{ij} &=2\Lambda,\\
\nabla^iK_{ij}-\nabla_jK &=0.\label{MomentumConst}
\end{align}
In these equations, $K=h^{ij}K_{ij}$, the Ricci scalar $R$, the contractions and
covariant derivatives are computed with respect to $h_{ij}$. The presence of
the cosmological constant $\Lambda$ allows for non asymptotically flat data
describing initial data of de Sitter ($\Lambda>0$) or anti-de Sitter
($\Lambda<0$) type.

When the initial data are maximal (i.e. $K=0$) the constraint equations
simplify considerably.  In particular, when $\Lambda\geq 0$, the scalar
curvature $R$ is non-negative. The condition $R\geq 0$ plays a crucial role in
this article.

The initial data are axially symmetric if there exists an axial Killing vector
field $\eta^i$ such that
\begin{equation}
\mathcal L_\eta h_{ij}=0,\qquad \mathcal L_\eta K_{ij}=0,
\end{equation}
where $\mathcal L$ denotes the Lie derivative. The Cauchy development of such
initial data will be an axially symmetric spacetime.

For an axially symmetric metric, it is always possible \cite{Chrusciel:2007dd}
to choose local coordinates ($r,\theta,\phi$) such that
\begin{equation}\label{metric}
h=e^ {\sigma}\left[e^ {2q}(dr^ 2+r^2d\theta^ 2)+r^2\sin^2\theta(d\phi+v_r dr
  +v_\theta d\theta)^ 2\right],
\end{equation}
where $\sigma, q, v_r$ and $v_\theta$ are regular functions of $r$ and
$\theta$. In these coordinates, the Killing vector is given by
\begin{equation}
\eta^i=(\partial_\phi)^i,
\end{equation}
and its square norm is
\begin{equation}
\eta=\eta^i\eta_i=e^{\sigma}r^2\sin^2\theta.
\end{equation}
The regularity conditions on the metric $h$ at the axis imply that
\begin{equation}
q|_\Gamma=0,
\end{equation}
where $\Gamma$ denotes the polar axis $r\sin\theta=0$.

The twist potential $\omega$ of the spacetime axial Killing field can be
computed in terms of the second fundamental form $K_{ij}$ as follows (see
\cite{Dain06c} for details). Define the vector $S^i$ by
\begin{equation}
 S_i=K_{ij}\eta^j-\eta^{-1}\eta_i K_{jk}\eta^j\eta^k,
\end{equation}
then, define $K_i$ by
\begin{equation}
 K_i=\epsilon_{ijk}S^j\eta^k,
\end{equation}
where $\epsilon_{ijk}$ is the volume element with respect to the flat
metric. In virtue of the constraint equations, the vector $K_i$ is
 the gradient of a scalar field which is the twist potential, namely
\begin{equation}\label{twist}
 K_i=\frac{1}{2}\nabla_i\omega.
\end{equation}

In this work, we will study the geometry of the 2-surfaces $r=constant$. For
these surfaces there exist two relevant quantities.  The first one is the
angular momentum, which for such surfaces is defined by
\begin{equation}\label{jota}
J=\frac{1}{8}\left(\omega(r,\theta=\pi)-\omega(r,\theta=0)\right).
\end{equation}
See \cite{Dain06c} for a detailed discussion about angular momentum in axial
symmetry.

The second quantity is the area of the surface. The induced 2-metric on such
surfaces is
\begin{equation}
\gamma=e^ \sigma r^2\left[e^ {2q}d\theta^2+\sin^2\theta (d\phi+v_\theta
  d\theta)^2\right], 
\end{equation}
and, remarkably, its determinant does not depend on $v_\theta$ or $v_r$,
\begin{equation}
\det(\gamma)=e^{2\sigma+2q}r^4\sin^2\theta.
\end{equation}
Then, the area of the surface $r=constant$ is given by
\begin{equation}\label{area}
A=\int_{S^2}\sqrt{\det(\gamma)}d\theta
d\phi=\int_{S^2}e^{\sigma+q}r^2dS=2\pi r^2\int_0^\pi e^{\sigma+q}\sin\theta
d\theta,
\end{equation}
where $dS=\sin\theta d\theta d\phi$ is the surface element on the unit 2-sphere
and in the last equality we have made use of axial symmetry.
 
Another useful geometrical quantity is the second fundamental form $\chi_{ij}$
of the surface given by
\begin{equation}\label{chi}
\chi_{ij}=-\gamma_i^k\nabla_k n_j
\end{equation}
where $n_i=e^{\sigma/2+q}(dr)_i$ is the unit normal to the surface. The mean
curvature of the surface is an important concept in what follows and reads
\begin{equation}\label{mean}
\chi=\chi^i_i=e^{-\sigma/2-q}\left(\partial_r(\sigma+q)+\frac{2}{r}\right),
\end{equation}

The mean curvature $\chi$ is related with the radial derivatives of the area
$A$ as follows. For the first derivative we have
\begin{equation}
  \label{eq:2}
 \partial_r A=\int_{S^2}e^{\frac{3}{2}\sigma+2q}r^2 \chi  \, dS, 
\end{equation}
and for the second derivative,
\begin{equation}
  \label{eq:3}
  \partial_r^2 A=\int_{S^2} \left[  e^{\frac{3}{2}\sigma+2q}r^2 \partial_r \chi
  +\chi \partial_r(r^2 e^{\frac{3}{2}\sigma+2q}) \right] \,dS.
\end{equation}
The case $\chi=0$ will be relevant for our purposes. For that case we have
\begin{equation}
  \label{eq:5}
  \partial_r^2 A=\int_{S^2} e^{\frac{3}{2}\sigma+2q}r^2 \partial_r \chi \, dS.
\end{equation}

The following theorem constitutes the main result of the present work.

\begin{theorem}
\label{teobh} 
Consider axisymmetric, vacuum and maximal initial data, with a non-negative
cosmological constant as described above. Assume there exists a surface
$\Sigma=\{r=constant\}$ where the following local conditions are
satisfied% at $\Sigma$
\begin{align}
\chi  &=0,\label{cond1} \\ 
\partial_r\chi &\geq0, \label{cond2}\\
 \partial_rq &=0.\label{cond3}
\end{align}
Then we have 
\begin{equation}\label{desigualdad}
8\pi |J|\leq A
\end{equation}
where $A$ is the area and $J$ the angular momentum of $\Sigma$.
\end{theorem}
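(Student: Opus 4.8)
The plan is to establish \eqref{desigualdad} by combining two estimates, carried out in Sections~\ref{secarea} and~\ref{MassMinimum}: a lower bound for the area $A$ of $\Sigma$ in terms of a \emph{mass functional} of the data induced on $\Sigma$, and a sharp lower bound for that mass functional which is saturated by the extreme Kerr throat. The first step is to read off from the hypotheses a stability statement for $\Sigma$. Since the data are maximal and $\Lambda\ge 0$, the Hamiltonian constraint \eqref{hamiltonianConst} gives $R=2\Lambda+K_{ij}K^{ij}\ge 0$, and the twist decomposition of $K_{ij}$ in axial symmetry improves this to a pointwise bound of the form $R\ge c\,\eta^{-2}|\nabla\omega|^2$ for some $c>0$; on $\Sigma$ this controls $|\nabla_\Sigma\omega|^2$ from below, and is the channel through which $J=\tfrac18\big(\omega(r,\pi)-\omega(r,0)\big)$ enters. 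From \eqref{mean}, condition \eqref{cond1} reads $\partial_r(\sigma+q)=-2/r$, so together with \eqref{cond3} it becomes $\partial_r\sigma=-2/r$ on $\Sigma$; this annihilates the diagonal part of the second fundamental form $\chi_{ij}$ of $\Sigma$ and, more usefully, removes every radial derivative of $\sigma$ and $q$ from the curvature expressions appearing below. Finally, \eqref{cond1}--\eqref{cond2} make $\Sigma$ a \emph{stable} minimal surface: decompose $\partial_r = N\,n + X$ into its normal part (radial lapse $N=e^{\sigma/2+q}>0$, unit normal $n$) and a part $X$ tangent to $\Sigma$; since $\chi\equiv 0$ on $\Sigma$ one then has $\partial_r\chi = L\,N$ there, where $L=-\Delta_\Sigma-\big(\chi_{ij}\chi^{ij}+\mathrm{Ric}(n,n)\big)$ is the Jacobi operator, so \eqref{cond2} says $N$ is a positive supersolution of $L$; hence $\lambda_1(L)\ge 0$ and the stability inequality
\begin{equation}\label{stab}
\int_\Sigma |\nabla_\Sigma\psi|^2\, dA \;\ge\; \int_\Sigma \big(\chi_{ij}\chi^{ij}+\mathrm{Ric}(n,n)\big)\,\psi^2\, dA ,\qquad \psi\in C^\infty(\Sigma),
\end{equation}
holds.

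For the area estimate (Section~\ref{secarea}) I would eliminate $\mathrm{Ric}(n,n)$ from \eqref{stab} (or directly from a ``mass formula'', an identity for $\mathcal{M}(\Sigma)$) using the traced Gauss equation $R_\Sigma=R-2\,\mathrm{Ric}(n,n)+\chi^2-\chi_{ij}\chi^{ij}$ and the Gauss--Bonnet identity $\int_\Sigma R_\Sigma\, dA=8\pi$ (recall $\Sigma\cong S^2$), and then invoke $\chi=0$ together with $R\ge c\,\eta^{-2}|\nabla\omega|^2$. After the simplifications produced by $\partial_r\sigma=-2/r$ and $\partial_r q=0$, only the intrinsic quantities $|\nabla_\Sigma\sigma|^2$ and $\eta^{-2}|\nabla_\Sigma\omega|^2$ remain, and one is led to an inequality of the schematic form
\begin{equation}\label{areaest}
A\ \ge\ \Phi\big(\mathcal{M}(\Sigma)\big),
\end{equation}
where $\mathcal{M}(\Sigma)$ is the mass functional --- a Dirichlet-type energy built from $\sigma|_\Sigma$ and $\omega|_\Sigma$, i.e.\ from an axisymmetric map $S^2\to\mathbb{H}^2$ --- and $\Phi$ is an explicit increasing function, normalized so that \eqref{areaest} is an equality on the extreme Kerr throat.

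The second ingredient (Section~\ref{MassMinimum}) is the sharp bound $\mathcal{M}(\Sigma)\ge\mathcal{M}_0$, where $\mathcal{M}_0$ is the value of the mass functional on the extreme Kerr throat of angular momentum $J$, so that $\Phi(\mathcal{M}_0)=8\pi|J|$ by the normalization above. One minimizes $\mathcal{M}$ over axisymmetric maps $(\sigma,\omega)\colon S^2\to\mathbb{H}^2$ subject only to the angular-momentum constraint (the prescribed jump of $\omega$ between the poles): the Euler--Lagrange equations are the reduced Ernst/harmonic-map equations, whose admissible, axis-regular solution is the extreme Kerr throat; and since the target geometry is negatively curved, $\mathcal{M}$ is convex along geodesic homotopies of maps, which promotes this critical point to the global minimizer. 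This global convexity inequality is the ``remarkable inequality'' alluded to in the introduction: it replaces, and strengthens to a global statement, the second-variation computation of \cite{dain10d}. Combining \eqref{areaest} with $\mathcal{M}(\Sigma)\ge\mathcal{M}_0$ and using that $\Phi$ is increasing yields $A\ge\Phi(\mathcal{M}(\Sigma))\ge\Phi(\mathcal{M}_0)=8\pi|J|$, which is \eqref{desigualdad}.

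I expect the second step to be the main obstacle. The geodesic-convexity argument is transparent at the formal level, but making it rigorous requires controlling the behaviour of admissible maps, and of the extreme Kerr throat itself, at the axis $\theta=0,\pi$, where $\eta\to0$ and $\sigma\to-\infty$: one must justify the integrations by parts underlying the convexity inequality and verify that the resulting boundary terms reproduce the angular momentum exactly. This is precisely where the purely local, second-variational reasoning of \cite{dain10d} is insufficient. A secondary, more technical difficulty lies in the area estimate --- carrying the function $q|_\Sigma$ and the off-diagonal metric coefficient $v_\theta$ through the computation and checking that all numerical constants conspire to make \eqref{areaest} sharp precisely on extreme Kerr.
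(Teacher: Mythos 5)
Your two-lemma architecture coincides exactly with the paper's: a lower bound $A\geq\Phi(\mathcal{M})$ for the area in terms of the mass functional (lemmas \ref{lema1} and \ref{lema2}), followed by the sharp global bound $\mathcal{M}\geq 8(\ln(2|J|)+1)$ attained by the extreme Kerr throat (lemma \ref{lemma:2}), whence $A\geq 4\pi e^{(\mathcal{M}-8)/8}\geq 8\pi|J|$. But both steps are executed by different means in the paper, and in each case your version leaves the decisive computation unexecuted. For the area bound the paper never invokes the Jacobi operator, the Gauss equation or Gauss--Bonnet: it integrates the pointwise consequence $|\partial\sigma|^2+|\partial\omega|^2/\eta^2\leq-4(\Delta\sigma+\Delta_2 q)$ of the Hamiltonian constraint over the sphere, and the hypotheses \eqref{cond1}--\eqref{cond3} enter only through explicit lower bounds on the radial-derivative integrals of \eqref{defFG}, namely $F\geq 32\pi$ (since $\chi=0$ and $\partial_r\chi\geq 0$ give $r^2\partial_r^2(\sigma+q)\geq 2$) and $G\geq-48\pi$ (since $\partial_r q=0$ and $\partial_r\sigma=-2/r$ make the integrand equal to $-12$ plus a nonnegative term). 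The ingredient missing from your sketch is how a lower bound on $\int_0^\pi(\varsigma+q)\sin\theta\,d\theta$ becomes a lower bound on $A=2\pi\int_0^\pi e^{\varsigma+q}\sin\theta\,d\theta$: in the paper this is Jensen's inequality for the exponential, which is precisely what produces the explicit increasing function $\Phi(x)=4\pi e^{(x-8)/8}$ and makes the constants close to $8\pi|J|$. Your stability-operator route ($\partial_r\chi=LN\geq 0$ with $N>0$ forcing nonnegativity of the principal eigenvalue, then Gauss plus Gauss--Bonnet) is viable and more invariant --- it is essentially how later work removed the $r=constant$ and $\partial_r q=0$ restrictions --- but it requires a specific, nontrivial choice of test function adapted to $\varsigma$, which you do not supply; "one is led to an inequality of the schematic form" is exactly the step that carries the content.

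For the second step, your geodesic-convexity argument for maps into $\mathbb{H}^2$ is a legitimate alternative, and you correctly identify the axis, where $\eta\to 0$, as the obstruction. The paper resolves this differently: rather than justifying a convexity identity on the punctured sphere, it applies the Hildebrandt--Kaul--Widman existence and uniqueness theorem for the Dirichlet problem on the region $\Omega_{IV}$ excluding the poles, after interpolating between the given data and the extreme Kerr throat data with the logarithmic cut-off of Li--Tian; the key property \eqref{intChi} guarantees the interpolation costs no energy as $\epsilon\to 0$, the constant-plus-boundary-term relation \eqref{relMMt} transfers the conclusion from $\tilde{\mathcal{M}}$ to $\mathcal{M}$, and the hypothesis $\partial_\theta\omega|_{\theta=0,\pi}=0$ is what makes $|\omega-\omega_0|^2/\sin^4\theta$ bounded in the transition region. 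Since you flag but do not close the axis difficulty, and since the explicit $\Phi$ is the quantitative heart of the first step, your proposal is a correct plan along partially different lines rather than a complete proof.
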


Let us discuss the hypothesis of this theorem.  The theorem is a pure local
result, in particular there are no conditions on the asymptotics of the initial
data. We have also introduced the cosmological constant, generalizing in this
way the validity of conjecture \ref{c:1}. As we mention in the
introduction, in \cite{Booth:2007wu} it has been presented a counter example to
inequality (\ref{eq:1}) with negative cosmological constant. Maximal data
with non-negative cosmological constant (as required in the hypothesis of
the theorem) have non-negative Ricci scalar. This is the crucial property that
allows us to prove (\ref{desigualdad}).

The first important restriction of the theorem is that only surfaces
$r=constant$ are allowed.  In the general case, the horizon mentioned in the
conjecture will not be such a surface. This particular choice of foliation
adapted to the cylindrical coordinates simplifies considerably the estimates. A
relevant open problem is how to extend these results to include general
surfaces.

By equations \eqref{eq:2} and \eqref{eq:5}, we deduce that conditions
\eqref{cond1} and \eqref{cond2} imply that the area of the surface $\Sigma$ is
a local minimum. That is, $\Sigma$ is a minimal surface \footnote{In the
  literature it is also common to call minimal a surface having $\chi=0$. In
  this article, we use the term extremal for such surface, and reserve the term
  minimal for surfaces which in addition are area minimizing.}. If these were
the only hypothesis in the theorem, then conjecture \ref{c:1} would be proved
for initial data having a global minimal surface $r=constant$: by definition,
the area of such minimal surface is less than or equal to the area of any
surface, the horizon in particular.  However, in order to prove the inequality
\eqref{desigualdad} we require the extra condition \eqref{cond3}. This is a
technical condition which we do not expect to be necessary. However, it is
important to emphasize that this is also a geometrical condition, since it can
be written in terms of $\chi$ and the component $\chi_{ij}\eta^i\eta^j$ of the
extrinsic curvature $\chi_{ij}$ of $\Sigma$, namely
\begin{equation}\label{chicero}
\chi=0\qquad \mbox{and} \qquad\chi_{ij}\eta^ i\eta^ j=0
\qquad\Rightarrow\qquad \partial_rq=0. 
\end{equation}  
This result can be seen in the following way
\begin{equation}
\chi_{ij}\eta^ i\eta^ j=-\gamma^k_i\nabla_k n_j\eta^i\eta^j=-\nabla_i
n_j\eta^i\eta^j=-\nabla_i(n_j\eta^j)\eta^i+n_j(\nabla_i\eta^j)\eta^i, 
\end{equation}
but the first term in the last expression is zero because $\eta^i$ and $n_i$
are orthogonal. Then
\begin{equation}
\chi_{ij}\eta^i\eta^j=n^j(\nabla_i\eta_j)\eta^i=-n^j(\nabla_j\eta_i)\eta^i
\end{equation}
since $\eta^i$ is a Killing vector field. Finally we have
\begin{equation}
  \chi_{ij}\eta^ i\eta^
  j=-\frac{1}{2}n^j\nabla_j\eta=-\frac{1}{2}n^r\partial_r\eta=-\frac{1}{2}n^rr^2
\sin^2\theta  e^{\sigma}\left(\partial_r\sigma+\frac{2}{r}\right),   
\end{equation}
where, in the second equality, we have made use of axial symmetry. Therefore,
$\chi_{ij}\eta^ i\eta^ j=0$ together with $\chi=0$, imply $\partial_rq=0$ (see
equation \eqref{mean}).

This alternative way of writing the hypotheses of theorem \ref{teobh} gives a
more geometrical description of the surface considered. In particular, totally
geodesic surfaces (i.e. surfaces such that $\chi_{ij}=0$) satisfy
\eqref{chicero} and \eqref{cond1}. Moreover, when $v_r\equiv v_\theta\equiv0$,
condition \eqref{chicero} implies that $\Sigma$ is totally geodesic.

There exists a particularly relevant class of initial data that satisfies all
conditions imposed in theorem \ref{teobh}.  Namely, initial data with an
isometry that leaves the surface $\Sigma$ invariant.  This isometry is also
called ``inversion through the throat'' in the literature.  Let us discuss this
family of examples in more detail.

In \cite{Gibbons72} it has been proven that a compact 2-surface that is
invariant under an isometry is a totally geodesic surface.  Note that the
isometry only imposes conditions on the first order derivatives of the initial
data functions evaluated at the invariant surface $\Sigma$.  Condition
(\ref{cond2}), which is a condition on the second derivatives of the metric, is
not automatically satisfied. In fact, a surface could be a local maximum and
still be invariant under the isometry. However, for a rich class of data this
surface is a global minimum. To analyze this point it is better to discuss
concrete examples.

A canonical example of this kind of isometric data is a slice $t=constant$
(in the standard Boyer-Lindquist coordinates) of the non-extreme Kerr black
hole. The geometry of these data (which is the same as the Schwarzschild black
hole) is the well known picture shown in figure
\ref{fig:isometry-initial-data}. The global minimal surface that connects the
two sheets satisfies all the hypothesis of the theorem. In this case the minimal
surface coincides with the apparent horizon. The Kerr black hole initial data
constitute a non-trivial example of the theorem. Although inequality
\eqref{desigualdad} can be of course computed explicitly for Kerr, the theorem
presents an alternative proof of it.

For more general black hole initial data, isometry conditions were introduced
in \cite{Misner63} and since then they have been extensively used to construct
initial data for black hole numerical simulations.  A well known example is
the conformally flat (i.e. $q=0$) family of black hole initial data introduced
by Bowen and York \cite{Bowen80}.  Another conformally flat case was analyzed
in \cite{Dain02c}.  Non-conformally flat examples have been studied in
\cite{Brandt95}. In all these examples black hole initial data with the geometry
shown in figure \ref{fig:isometry-initial-data} have been constructed.  These
data represent distorted black holes and can, in principle, be
far from equilibrium. However, it is important to emphasize that in order for
the isometry invariant surface $\Sigma$ to remain a global minimum of the area the
distortion should not produce extra minimal surfaces, as in the case shown in
figure \ref{fig:multiple-minimal-surfaces}. In that case the theorem still
applies for the surface $\Sigma$ but since it is does not give the global minimum of the area,
we can not prove the conjecture. 

Conditions ensuring that the isometry surface gives a minimum for the area have been studied for some examples in \cite{Bowen80}. There exists also a lot of
numerical evidence for all these examples showing that if the
distortion from Kerr is not too severe, then the isometry surface gives in fact a
global minimum for the area (for example, see \cite{Cook90a}).  It is
interesting to note that in many of these examples the horizon and the minimal
surface do not coincide \cite{Cook90a}. Also, this kind of
isometry data can represent binary black hole initial data (this is in fact
one of the main applications of these data, see \cite{Misner63},
\cite{Bowen80})). In the case of binary black holes, there exist two minimal
surfaces which are invariant under the isometry. Remarkable enough, these
surfaces also satisfy conditions (\ref{cond1}), (\ref{cond2}) and
(\ref{cond3}). But they are not $r=constant$ surfaces, and hence the theorem
does not apply to them. 

It is also important to mention that the isometry condition is preserved under
the evolution (in fact it has been explicitly used for numerical evolution, see
\cite{Brandt95}) and hence it can play a useful role in the analytical study
of the black hole stability problem for this kind of data.

We summarize the above discussion in the following corollary. 
\begin{corollary}
  Consider axisymmetric, vacuum and maximal initial data, with a
  non-negative cosmological constant as described above. Assume there exists an
  isometry which leaves fixed a surface $\Sigma=\{r=constant\}$. Assume also
  that the area of this surface is a global minimum. Then conjecture \ref{c:1}
  is proven for these data.
\end{corollary}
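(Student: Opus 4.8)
The plan is to show that the isometry-fixed surface $\Sigma$ satisfies all the hypotheses of Theorem \ref{teobh}, and then to upgrade the conclusion $8\pi|J_\Sigma|\le A_\Sigma$ to a statement about a connected component $\mathcal H$ of the apparent horizon by using the global area-minimality of $\Sigma$ together with the fact that the angular momentum \eqref{jota} does not depend on which coordinate sphere one evaluates it on. Throughout, recall that the data are maximal with $\Lambda\ge0$, so $R\ge0$ as required by the theorem.

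First I would verify the geometric conditions. Since $\Sigma$ is a compact two-surface invariant under an isometry of $(S,h)$, the result of \cite{Gibbons72} quoted in the text gives that $\Sigma$ is totally geodesic, i.e.\ $\chi_{ij}=0$ on $\Sigma$. Taking the trace yields \eqref{cond1}, and the contraction $\chi_{ij}\eta^i\eta^j=0$ combined with \eqref{chicero} forces $\partial_r q=0$, which is \eqref{cond3}. For \eqref{cond2} I would invoke the global-minimum hypothesis: by \eqref{eq:2}, $\chi=0$ already gives $\partial_r A\big|_\Sigma=0$, and since the area of $\Sigma$ is a global (hence, in particular, local) minimum along the surfaces $\{r=\mathrm{const}\}$ we have $\partial_r^2 A\big|_\Sigma\ge0$, which by \eqref{eq:5} reads $\int_\Sigma e^{\frac32\sigma+2q}r^2\,\partial_r\chi\,dS\ge0$. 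This is the integrated form of \eqref{cond2}; I would check (or note) that it is precisely this integrated version — equivalently, the stability inequality for $\Sigma$ evaluated on the foliation lapse $N=e^{\sigma/2+q}$ — that enters the proof of Theorem \ref{teobh}, so that the theorem applies to $\Sigma$ and yields $8\pi|J_\Sigma|\le A_\Sigma$.

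Next I would transfer this to the horizon. Because $\Sigma$ has globally minimal area, $A_\Sigma\le A_{\mathcal H}$ for every connected component $\mathcal H$ of the apparent horizon. For the angular momentum, note that $K_i=\tfrac12\nabla_i\omega$ with $K_i=\epsilon_{ijk}S^j\eta^k$ by \eqref{twist}; since $\eta^i$ vanishes on the axis $\Gamma$, so does $K_i$, hence $\omega$ is constant along each component of $\Gamma$ and its polar values — and therefore $J$ in \eqref{jota} — are the same for every surface $\{r=\mathrm{const}\}$ anchored to those axis components. In the standard ``one sheet'' configuration of figure \ref{fig:isometry-initial-data}, where $\mathcal H$ lies in the exterior region and is anchored to the same axis components as $\Sigma$, this gives $J_{\mathcal H}=J_\Sigma$. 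Combining, $8\pi|J_{\mathcal H}|=8\pi|J_\Sigma|\le A_\Sigma\le A_{\mathcal H}$, i.e.\ inequality \eqref{eq:1} holds, proving Conjecture \ref{c:1} for these data.

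The main obstacle I anticipate is the precise matching of condition \eqref{cond2}: global area-minimality directly yields only the integrated inequality $\partial_r^2 A\ge0$ (equivalently, stability with respect to the lapse $N$), whereas Theorem \ref{teobh} is stated with the pointwise condition $\partial_r\chi\ge0$, so one must confirm that the proof of the theorem uses only the integrated/stability form (which I expect to be the case, as the estimate of Section \ref{secarea} integrates over $\Sigma$). A secondary, more routine point is the identification $J_{\mathcal H}=J_\Sigma$, which rests on the vacuum momentum constraint (making $K_i$ a gradient and $\omega$ globally defined) and on $\mathcal H$ being homologous to $\Sigma$ and attached to the same axis — both valid in the configurations discussed in the text but worth stating explicitly.
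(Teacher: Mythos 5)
Your overall route is the same as the paper's: invoke \cite{Gibbons72} to conclude that the isometry-invariant surface is totally geodesic, hence \eqref{cond1} holds and, via \eqref{chicero}, so does \eqref{cond3}; use minimality to supply \eqref{cond2}; apply theorem \ref{teobh} to get $8\pi|J_\Sigma|\le A_\Sigma$; and use global area-minimality together with the constancy of $\omega$ along the axis to transfer the inequality to the horizon. That last step (that $J$ agrees on $\Sigma$ and on the horizon because $K_i=\tfrac12\nabla_i\omega$ vanishes on the axis) is left implicit in the paper but is correct and worth making explicit, as you do.

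The genuine problem is the step you yourself flag as the ``main obstacle'', and your proposed resolution does not work. Global minimality of the area along the foliation gives, by \eqref{eq:5} and $\chi=0$, only
\begin{equation}
0\le\partial_r^2A=\int_{S^2}e^{\frac{3}{2}\sigma+2q}r^2\,\partial_r\chi\,dS=\int_{S^2}e^{\sigma+q}\left(r^2\partial_r^2(\sigma+q)-2\right)dS,
\end{equation}
whereas the proof of lemma \ref{lema2} needs the \emph{unweighted} bound $F|_\Sigma=\int_{S^2}4r^2\partial_r^2(\sigma+q)\,dS\ge32\pi$, i.e.\ $\int_{S^2}\left(r^2\partial_r^2(\sigma+q)-2\right)dS\ge0$. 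Because of the weight $e^{\sigma+q}$ these two integral inequalities are not comparable, so the integrated/stability form of \eqref{cond2} cannot be substituted into the area estimate \eqref{area2}; what is actually used is the pointwise condition $\partial_r\chi\ge0$, equivalently $r^2\partial_r^2(\sigma+q)\ge2$ on $\Sigma$. The paper does not close this gap either: it states explicitly that condition \eqref{cond2} ``is not automatically satisfied'' by the isometry and in effect treats it as a hypothesis to be checked in the examples. So your argument is complete only if you either assume \eqref{cond2} pointwise (as theorem \ref{teobh} does) or verify it for the data at hand; it cannot be deduced from global area-minimality alone in the way you anticipate.
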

 
\begin{figure}[h]
 \centering
   \includegraphics[width=0.5\textwidth]{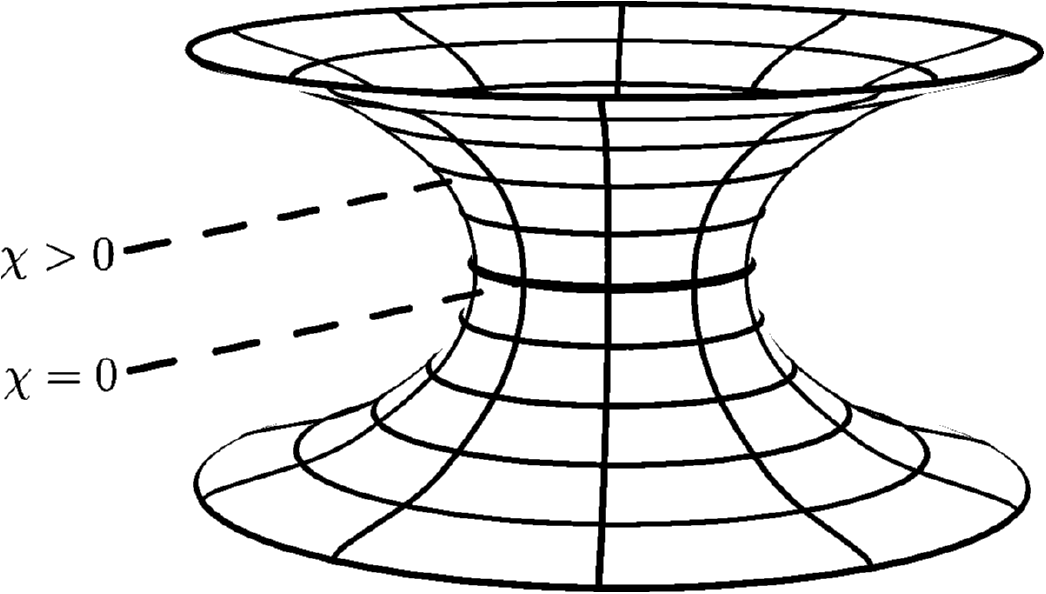}
 \caption{Initial data with an isometry. The data have a global minimal surface
 in the middle. The conjecture \ref{c:1} is proven for this kind of data. }
 \label{fig:isometry-initial-data}
\end{figure}

\begin{figure}[h]
 \centering
  \includegraphics[width=0.3\textwidth]{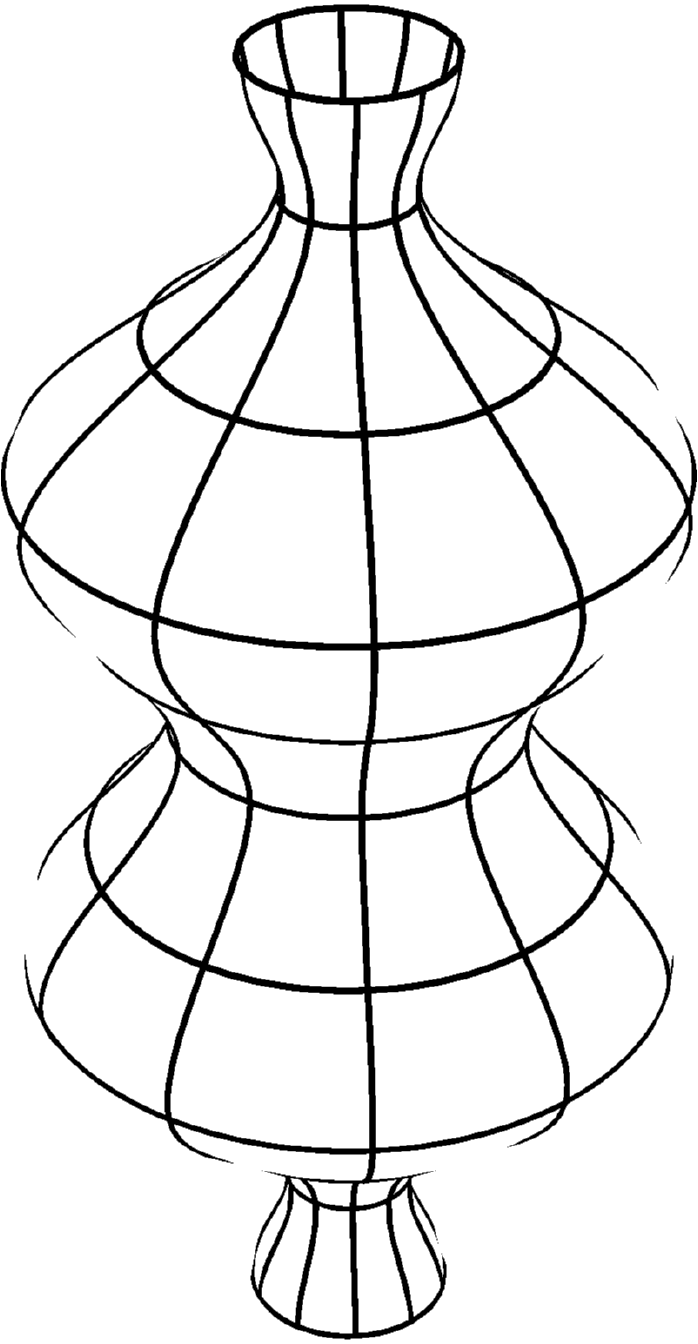}
 \caption{An example of data with an isometry. The theorem applies
   to the minimal surface in the middle, but the conjecture can not be proven in
 this case because there are other minimal surfaces with less area.}
 \label{fig:multiple-minimal-surfaces}
\end{figure}

There exists another class of initial data that is not strictly included in
the formulation of conjecture \ref{c:1}, but which plays an important role in the
proof as a limit case. And also, it could have further interesting applications.
These are the extreme throat initial data introduced in \cite{dain10d}. This
kind of data is a special case of axially symmetric data discussed above
where an additional symmetry is present. In order to introduce them, it is
convenient to make the change of coordinates $s=-\ln r$ to the metric
\eqref{metric}. We also define the function $\varsigma$ by
\begin{equation}\label{vsigma}
\varsigma= \sigma + 2\ln r. 
\end{equation}
Then, the line element \eqref{metric} is written as 
\begin{equation}
\label{metricCil}
  h=e^ {\varsigma}\left[e^ {2q}(ds^2+d\theta^ 2)+\sin^2\theta(d\phi+v_s ds
    +v_\theta d\theta)^ 2\right]. 
\end{equation}
Assume that $\varsigma$, $q$, $v_s$ and $v_\theta$ do not depend on $s$, then
$\partial_s$ is a Killing vector of $h$ (besides $\partial_\phi$). If also
$\mathcal L_{\partial_s} K_{ij}=0$ then we call $(S,h_{ij},K_{ij})$ an extreme
throat initial data set. The geometry of these data is 
cylindrical (see figure \ref{fig:cylinder-initial-data}), as $S$ is $S^2\times\mathbb{R}$ and the sections $s=constant$ are topological spheres which are
isometric to each other.  

\begin{figure}[h]
 \centering
  \includegraphics[width=0.2\textwidth]{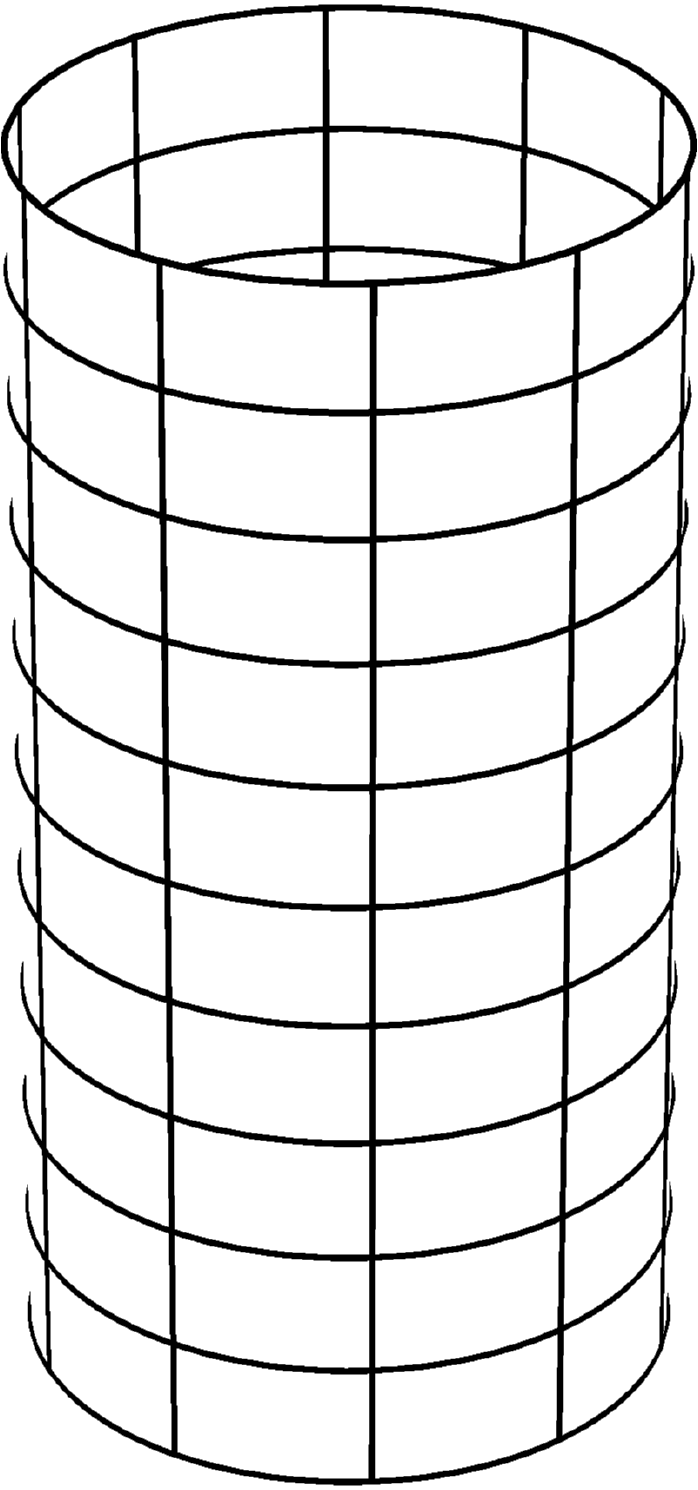}
 \caption{Cylindrical geometry of extreme throat initial data.}
 \label{fig:cylinder-initial-data}
\end{figure}

For these data the twist potential $\omega$ is defined in the same way as done
for general axially symmetric data, and now it is a function that depends only
on $\theta$. As the data are symmetric with respect to translations in the $s$
direction, the angular momentum and area associated with these data do not
depend on $s$. They are given by
\begin{equation}\label{jotaExt}
 J=\frac{1}{8}(\omega(\pi)-\omega(0)),
\end{equation}
\begin{equation}
\label{AExt}
 A=2\pi\int_0^\pi e^{\varsigma+q}\sin\theta\,d\theta,
\end{equation}
which are the corresponding expressions to \eqref{jota} and \eqref{area}.

A particularly relevant functional for this class of data is the mass
functional ${\cal M}$, defined as (see \cite{dain10d})
\begin{equation}
\label{eq:4}
 {\cal
   M}=\int_0^\pi\left(|\partial_\theta\varsigma|^2+
4\varsigma+\frac{|\partial_\theta\omega|^2}{\eta^2}\right)\sin\theta\,d\theta.   
\end{equation}
This functional plays a fundamental role in the proof of theorem
\ref{teobh} and has two important properties. The first one is
that it is possible to relate it with the area $A$. The second property is that
it is essentially equivalent to the energy of an harmonic map. Both properties
will constitute the core of the proof of theorem \ref{teobh}, they will be
explained in sections \ref{secarea} and \ref{MassMinimum} respectively.  

Note that by the non-dependence on $s$ (and hence on $r$) of the functions,
conditions \eqref{cond1}, \eqref{cond2} and \eqref{cond3} are satisfied for
extreme throat initial data. Hence we have the following corollary of theorem \ref{teobh}.

\begin{corollary}
\label{teocil}
  For an extreme throat initial data set the inequality 
\begin{equation}
 8\pi|J|\leq A.
\end{equation}
holds, where $A$ and $J$ are the area and angular momentum of the data given by
\eqref{jotaExt} and \eqref{AExt}.
\end{corollary}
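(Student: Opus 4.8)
The plan is to derive Corollary \ref{teocil} directly from Theorem \ref{teobh}, the only work being to check that extreme throat initial data satisfy the three local hypotheses \eqref{cond1}--\eqref{cond3} on every section $\Sigma=\{r=\text{constant}\}=\{s=\text{constant}\}$. These data are, as introduced above, axisymmetric, vacuum and maximal (with $\Lambda\ge 0$, so in particular $R\ge 0$), hence they are of the type considered in the theorem; what must be verified is purely the behaviour of the metric functions in the radial direction.

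First I would translate the $s$-independence of $\varsigma$, $q$, $v_s$, $v_\theta$ into the original coordinate $r$. Since $s=-\ln r$, the function $q$ does not depend on $r$, which is exactly condition \eqref{cond3}: $\partial_r q=0$. Moreover, from the definition \eqref{vsigma} we have $\sigma=\varsigma-2\ln r$, and since $\varsigma$ is $r$-independent this gives $\partial_r\sigma=-2/r$. Substituting $\partial_r\sigma=-2/r$ and $\partial_r q=0$ into the mean curvature formula \eqref{mean} yields
\begin{equation*}
\chi=e^{-\sigma/2-q}\left(\partial_r(\sigma+q)+\frac{2}{r}\right)=e^{-\sigma/2-q}\left(-\frac{2}{r}+\frac{2}{r}\right)=0,
\end{equation*}
which is condition \eqref{cond1}. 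Because this computation is valid on every surface $r=\text{constant}$, the function $\chi$ vanishes identically in a neighbourhood of $\Sigma$, so $\partial_r\chi=0\ge 0$, giving condition \eqref{cond2}. (Equivalently, all the cylindrical sections are isometric, hence have the same, vanishing, extrinsic curvature.)

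With \eqref{cond1}--\eqref{cond3} in hand, Theorem \ref{teobh} applies to $\Sigma$ and gives $8\pi|J|\le A$. To close the argument I would only need to observe that the area and angular momentum of $\Sigma$ are precisely the quantities \eqref{AExt} and \eqref{jotaExt}: by $s$-independence the twist potential $\omega$ depends on $\theta$ alone, so \eqref{jota} reduces to \eqref{jotaExt}; and inserting $e^{\sigma}=e^{\varsigma}r^{-2}$ into \eqref{area} cancels the factor $r^2$, turning it into \eqref{AExt} (consistently, this shows $A$ is $s$-independent). The regularity condition $q|_\Gamma=0$ and the construction of $\omega$ from the constraints carry over verbatim to the cylindrical setting.

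I do not expect a genuine obstacle here: the whole content is the observation, already foreshadowed before the corollary, that the $r$-dependence of $\sigma$ is exactly the logarithmic term $-2\ln r$ needed to cancel the $2/r$ in \eqref{mean}, so that extreme throat data automatically realise the ``extremal minimal surface'' situation of the theorem with equality in \eqref{cond2}. The only point requiring care is the coordinate bookkeeping between the $r$- and $s$-descriptions; no new estimate beyond Theorem \ref{teobh} is needed.
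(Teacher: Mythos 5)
Your proposal is correct and follows exactly the paper's route: the paper simply observes that the $s$-independence of the metric functions forces conditions \eqref{cond1}--\eqref{cond3} to hold on every section, and then invokes Theorem \ref{teobh}. You merely make explicit the computation $\partial_r\sigma=-2/r$ cancelling the $2/r$ term in \eqref{mean}, which the paper leaves implicit.
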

This corollary significantly extends the results presented in \cite{dain10d} in
two directions. First, it applies to general extreme throat initial data. The
Killing vector $\partial_s$ is not required to be hypersurface
orthogonal. In fact, the result applies to a slightly more general class of data,
the only condition required is that the functions $\varsigma$, $q$ and
$\omega$ do not depend on $s$. But no condition is imposed on $v_s$ and
$v_\theta$. These functions could depend on $s$, in that case the data will not
admit the Killing field $\partial_s$ but the corollary will still hold.

The second extension with respect to \cite{dain10d} is that this is a global
result and not a local one. As we mentioned in the introduction, to prove this
result we will use a remarkable inequality relating the area and the mass
functional. This is explained in section \ref{secarea}.

The importance of extreme throat initial data resides on that they naturally
appear as the limit geometry of initial data with a cylindrical end. The
canonical example of these data is the extreme Kerr black hole initial data.
See figure \ref{fig:cylindrical-end-initial-data} for a representation of the
geometry of these data. At the cylindrical end, all the derivatives with
respect to $r$ of the relevant functions decay to zero, but not the functions
themselves. They have a well defined limit. These limit functions define extreme throat initial data (for the details of this construction see
\cite{dain10d}). 

Remarkably, the previous corollary applies to
the limit area of the cylindrical end of extreme black hole initial
data. This limit area is not directly related with conjecture \ref{c:1} because
it is not a horizon. But it still has interesting applications. 
Extreme Kerr is of course an example, where the equality holds. But there
are also other examples as the ones constructed in \cite{Dain:2008ck}
\cite{Dain:2008yu}   \cite{gabach09} \cite{Dain:2010uh}.
\begin{figure}[h]
 \centering
  \includegraphics[width=0.5\textwidth]{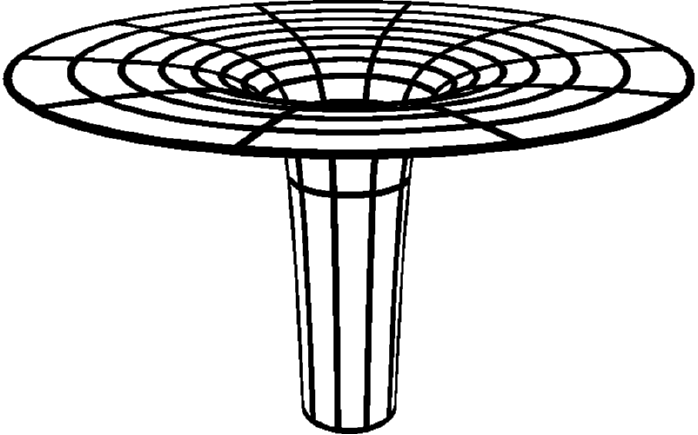}
 \caption{Geometry of extreme black hole initial data. The data have an asymptotically flat end (top) and a cylindrical end (bottom).}
 \label{fig:cylindrical-end-initial-data}
\end{figure}

Finally, we present the proof of theorem \ref{teobh}. 

\begin{proof}[Proof of theorem \ref{teobh}]

  The proof is divided into two main parts described in sections \ref{secarea}
  and \ref{MassMinimum}. First we obtain an inequality relating the area
  with the mass functional. This is given by lemma \ref{lema2}.  Here, we have
  made use of hypothesis (\ref{cond1}), (\ref{cond2}) and
  (\ref{cond3}). Then, using lemma \ref{lemma:2} we bound the mass functional
  by the angular momentum and we obtain the desired result.
\end{proof}

\section{Area and mass functional}\label{secarea}

The purpose of this section is to get a lower bound on the area $A$ of the surface $\Sigma$ mentioned in theorem \ref{teobh} in terms of the mass functional \eqref{eq:4}. All the calculations that follow are local, no asymptotic behavior being assumed.

The Hamiltonian equation \eqref{hamiltonianConst}, together with the maximality condition, $K=0$, give
\begin{equation}\label{maxConst}
R=K_{ij}K^{ij}+2\Lambda.
\end{equation}
In \cite{Dain06c} it has been proven that
\begin{equation}\label{cotak}
K_{ij}K^{ij}\geq \frac{1}{2}\frac{|\partial\omega|^2}{\eta^2}e^{-\sigma-2q},
\end{equation}
where $\omega$ is the twist potential introduced in \eqref{twist}. By inserting \eqref{cotak} in \eqref{maxConst} and using the condition $\Lambda\geq0$, one gets the bound
\begin{equation}\label{ineqRicci}
R\geq\frac{1}{2}\frac{|\partial\omega|^2}{\eta^2}e^{-\sigma-2q}.
\end{equation}
The Ricci scalar $R$ in terms of the metric functions $\sigma$, $q$, $v_r$ and $v_\theta$ is
\begin{equation}\label{ricci}
R=-2e^ {-\sigma-2q}\Bigg(\Delta\sigma+\Delta_2q+\frac{1}{4}|\partial\sigma|^2+\frac{1}{4}\sin^2\theta e^{-2q}(v_{r,\theta}-v_{\theta,r})^2\Bigg),
\end{equation}
where $\Delta$ and $\Delta_2$ are the flat Laplace operators in three and two dimensions respectively, and $\partial$ denotes partial derivatives. From \eqref{ineqRicci} and \eqref{ricci} we finally obtain
\begin{equation}\label{funda}
|\partial\sigma|^2+\frac{|\partial\omega|^2}{\eta^2}\leq-4\left(\Delta\sigma+\Delta_2q\right).
\end{equation}

We write this inequality explicitly in spherical coordinates and arrange terms in the following useful way
\begin{equation}\label{ecfunesf}
(\partial_\theta\sigma)^2+\frac{(\partial_\theta\omega)^2}{\eta^2}\leq-4(\partial_\theta^2q+\Delta_0\sigma)-f-g
\end{equation}
where $\Delta_0$ is the Laplace operator on $S^2$ acting on axially symmetric functions
\begin{equation}
\Delta_0\sigma=\frac{1}{\sin\theta}\partial_\theta(\sin\theta\partial_\theta \sigma),
\end{equation}
and
\begin{align}
\label{deff}& f=4r^2\partial^2_r(\sigma+q),\\
\label{defg}& g=4r\partial_r(2\sigma+q)+r^2(\partial_r\sigma)^2+r^ 2\frac{(\partial_r\omega)^2}{\eta^2}.
\end{align}
Now we integrate \eqref{ecfunesf} in the unit 2-sphere. Using
\begin{equation}
\int_{S^2}\Delta_0\sigma dS=0
\end{equation} 
and (see \cite{dain10d})
\begin{equation}
\int_{S^2}\partial_\theta^2q dS=-\int_{S^2} q  dS,
\end{equation}
we obtain
\begin{equation}\label{cota1}
\int_{S^2}(\partial_\theta\sigma)^2+\frac{(\partial_\theta\omega)^2}{\eta^2}dS\leq4\int_{S^2}qdS-F-G,
\end{equation}
where we have defined 
\begin{equation}\label{defFG}
 F=\int_{S^2}fdS,\qquad G=\int_{S^2}gdS.
\end{equation}
In order to make contact with the mass functional $\mathcal M$, we write the left hand side of inequality \eqref{cota1} in terms of $\varsigma$, defined in \eqref{vsigma},
\begin{equation}\label{cota2}
\int_{S^2}(\partial_\theta\varsigma)^2+\frac{(\partial_\theta\omega)^2}{\eta^2}dS\leq4\int_{S^2}qdS-F-G.
\end{equation}
Adding up a term of the form $4\int_{S^2}\varsigma dS$ to both sides of \eqref{cota2} we obtain an upper bound for $\mathcal M$,
\begin{equation}\label{cota3}
2\pi \mathcal M\leq4\int_{S^2}(\varsigma+q)dS-F-G.
\end{equation}
This is more conveniently written as
\begin{equation}\label{cota4}
\frac{\mathcal M}{8}+\frac{F+G}{16\pi}\leq\frac{1}{2}\int_0^\pi(\varsigma+q)\sin\theta d\theta,
\end{equation}
where we have integrated the right hand side over $\phi$. For our purposes later, it is useful to exponentiate the above inequality 
\begin{equation}\label{cotaexp}
e^{\frac{\mathcal M}{8}}e^{\frac{F+G}{16\pi}}\leq e^{\frac{1}{2}\int_0^\pi(\varsigma+q)\sin\theta d\theta}.
\end{equation}

We want to remark that in going from \eqref{cota1} to \eqref{cota2} there appears to be an inconsistency concerning units. Nevertheless, since in the end we take the exponential of that inequality, the final result, \eqref{cotaexp}, has the right dimension of area.

We want to use this inequality to bound the area of the surfaces $r=constant$. We write the area for these surfaces in the form (see equation \eqref{area})
\begin{equation}
\frac{A}{4\pi }=\frac{1}{2}\int_0^\pi e^{\varsigma+q}\sin\theta d\theta,
\end{equation}
then, using Jensen's inequality for the exponential function we have
\begin{equation}
\frac{1}{2}\int_0^\pi e^{\varsigma+q}\sin\theta d\theta\geq e^{\frac{1}{2}\int_0^\pi\varsigma+q \sin\theta d\theta}
\end{equation}
which gives
\begin{equation}\label{area1}
\frac{A}{4\pi}\geq e^{\frac{1}{2}\int_0^\pi\varsigma+q \sin\theta d\theta}.
\end{equation}
Finally, we put together inequalities \eqref{cotaexp} and \eqref{area1} to obtain
\begin{equation}\label{area2}
A\geq4\pi e^{\frac{\mathcal M}{8}}e^{\frac{F+G}{16\pi}},
\end{equation}
and this gives a bound for the area of the surface $r=constant$ through the functional $\mathcal M$ and radial derivatives of $\sigma$ and $q$.
 
This result proves the following lemma.
\begin{lemma}\label{lema1}
The area $A$ of a surface $r=constant$ satisfies inequality \eqref{area2}, where $F$ and $G$ are defined in \eqref{defFG} and $\mathcal M$ is evaluated at $r$.
\end{lemma}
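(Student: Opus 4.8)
The plan is to derive the bound purely from the constraint equations and the explicit form of the axisymmetric metric, with Jensen's inequality supplying the final passage from a pointwise estimate to an estimate on the area integral. First I would combine the Hamiltonian constraint \eqref{hamiltonianConst} with the maximality condition $K=0$ to write $R=K_{ij}K^{ij}+2\Lambda$, and then invoke the pointwise bound $K_{ij}K^{ij}\geq \tfrac12\eta^{-2}|\partial\omega|^2 e^{-\sigma-2q}$ proved in \cite{Dain06c} together with $\Lambda\geq 0$ to obtain $R\geq \tfrac12\eta^{-2}|\partial\omega|^2 e^{-\sigma-2q}$. This is the only place where the hypotheses ``maximal'' and ``$\Lambda\geq 0$'' enter, and it is also where the non-negativity of the Ricci scalar is exploited.

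Next I would feed this into the explicit expression \eqref{ricci} for $R$ in terms of $\sigma,q,v_r,v_\theta$. Since the term involving $(v_{r,\theta}-v_{\theta,r})^2$ is manifestly non-negative it can simply be dropped, which only strengthens the inequality and yields the clean relation $|\partial\sigma|^2+\eta^{-2}|\partial\omega|^2\leq -4(\Delta\sigma+\Delta_2 q)$. I would then rewrite both sides in spherical coordinates and split off the radial contributions, collecting them into the quantities $f$ and $g$ of \eqref{deff}--\eqref{defg}, so that the remaining angular terms are exactly those that will assemble into the mass functional. Integrating over the unit sphere and using the two identities $\int_{S^2}\Delta_0\sigma\,dS=0$ and $\int_{S^2}\partial_\theta^2 q\,dS=-\int_{S^2}q\,dS$ produces \eqref{cota1}.

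The remaining steps are bookkeeping: change variables from $\sigma$ to $\varsigma=\sigma+2\ln r$ via \eqref{vsigma}, add $4\int_{S^2}\varsigma\,dS$ to both sides so the left-hand side becomes a multiple of $\mathcal M$, and rearrange into \eqref{cota4}; exponentiating gives \eqref{cotaexp}. Separately, writing $A/(4\pi)=\tfrac12\int_0^\pi e^{\varsigma+q}\sin\theta\,d\theta$ from \eqref{area} and applying Jensen's inequality for the convex exponential against the probability measure $\tfrac12\sin\theta\,d\theta$ gives $A/(4\pi)\geq \exp\!\big(\tfrac12\int_0^\pi(\varsigma+q)\sin\theta\,d\theta\big)$, which is \eqref{area1}. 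Chaining \eqref{area1} with \eqref{cotaexp} yields \eqref{area2}, and the lemma follows.

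The step I expect to be most delicate is the reorganization of \eqref{funda} in spherical coordinates into \eqref{ecfunesf}: one has to be certain that every radial-derivative term is accounted for in $f$ and $g$, that the leftover angular part is precisely $(\partial_\theta\sigma)^2+\eta^{-2}(\partial_\theta\omega)^2$ modulo the $-4(\partial_\theta^2 q+\Delta_0\sigma)$ that integrates cleanly, and that after adding $4\varsigma$ one lands exactly on $\mathcal M$ with the correct overall factor $2\pi$. Everything else --- the \cite{Dain06c} bound, discarding the $(v_{r,\theta}-v_{\theta,r})^2$ term, the two integration-by-parts identities on $S^2$, and the single application of Jensen --- is routine once that algebra is set up correctly.
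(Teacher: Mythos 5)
Your proposal reproduces the paper's argument step for step: the same use of the Hamiltonian constraint with maximality and $\Lambda\geq 0$ to get $R\geq\tfrac12\eta^{-2}|\partial\omega|^2 e^{-\sigma-2q}$, the same insertion into the explicit Ricci formula with the $(v_{r,\theta}-v_{\theta,r})^2$ term dropped, the same splitting of radial terms into $f$ and $g$, the same two integration identities on $S^2$, and the same exponentiation plus Jensen's inequality to reach \eqref{area2}. The proof is correct and matches the paper's.
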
 
 
We remark that in order to get inequality \eqref{area2} we have only made use of the Hamiltonian  constraint, the maximality condition, the positivity of the cosmological constant $\Lambda$ and the bound \eqref{cotak} for the square of the extrinsic curvature. Moreover, it holds for any surface $r=constant$. 

Inequality \eqref{area2} is the crucial ingredient in proving the following lemma.

\begin{lemma}\label{lema2}
If conditions \eqref{cond1}-\eqref{cond3} of theorem \ref{teobh} hold for the surface $\Sigma$ of constant $r$, then its area $A$ satisfies
\begin{equation}
A\geq4\pi e^{\frac{(\mathcal M-8)}{8}}
\end{equation}
where $\mathcal M$ is the mass functional \eqref{eq:4} evaluated at the surface $\Sigma$.
\end{lemma}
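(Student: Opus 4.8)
The plan is to feed the three local conditions \eqref{cond1}--\eqref{cond3} into inequality \eqref{area2} of lemma \ref{lema1}, which already gives $A \geq 4\pi e^{\mathcal{M}/8} e^{(F+G)/(16\pi)}$ for any surface $r=\mathrm{constant}$. Since the exponential is increasing, the whole lemma reduces to proving the single scalar estimate $F+G \geq -16\pi$; granting this, one gets $A \geq 4\pi e^{\mathcal{M}/8}e^{-1} = 4\pi e^{(\mathcal{M}-8)/8}$ immediately. So everything comes down to a lower bound for $F$ together with control of the negative contributions to $G$.

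First I would translate \eqref{cond1} and \eqref{cond2} into statements about radial derivatives of $\sigma+q$ using the mean curvature formula \eqref{mean}, $\chi = e^{-\sigma/2-q}\left(\partial_r(\sigma+q)+\tfrac{2}{r}\right)$. Condition \eqref{cond1} reads $\partial_r(\sigma+q) = -2/r$ on $\Sigma$. Differentiating $\chi$ in $r$, the contribution from differentiating the prefactor $e^{-\sigma/2-q}$ is proportional to $\partial_r(\sigma+q)+2/r$, which vanishes on $\Sigma$ by \eqref{cond1}; hence on $\Sigma$ one has $\partial_r\chi = e^{-\sigma/2-q}\left(\partial_r^2(\sigma+q)-2/r^2\right)$, and \eqref{cond2} becomes $r^2\partial_r^2(\sigma+q)\geq 2$. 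By the definition \eqref{deff}, $f\geq 8$ pointwise, so $F=\int_{S^2}f\,dS \geq 32\pi$.

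Next I would use \eqref{cond3}: since $\partial_r q=0$ on $\Sigma$, condition \eqref{cond1} gives $r\partial_r\sigma = -2$, and substituting into \eqref{defg}, the first term of $g$ equals $4r\,\partial_r(2\sigma+q)=8r\partial_r\sigma=-16$, the second equals $r^2(\partial_r\sigma)^2=4$, and the twist term $r^2(\partial_r\omega)^2/\eta^2$ is non-negative; thus $g\geq -12$ pointwise and $G=\int_{S^2}g\,dS \geq -48\pi$. Adding the two bounds yields exactly $F+G\geq 32\pi-48\pi=-16\pi$, which is what was needed, and the lemma follows.

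I do not expect a serious obstacle here: the real content is the two observations that the prefactor term in $\partial_r\chi$ drops out on $\Sigma$ precisely because $\chi=0$ there, and that the positive bound $F\geq 32\pi$ is exactly large enough to absorb the negative bound $G\geq -48\pi$ down to the borderline constant $-16\pi$. The point to be careful about is that the only slack in the whole chain sits in the non-negative twist term of $g$ and in the Jensen step behind \eqref{area2}; for extreme throat data all radial derivatives of $\varsigma$, $q$ and $\omega$ vanish, so $f\equiv 8$, $g\equiv -12$ and $F+G=-16\pi$ exactly, which is consistent with corollary \ref{teocil} being the equality/limit case.
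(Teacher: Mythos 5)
Your proposal is correct and follows essentially the same route as the paper: it feeds conditions \eqref{cond1}--\eqref{cond3} into the bound \eqref{area2} of lemma \ref{lema1} and establishes exactly the two estimates $F|_\Sigma\geq 32\pi$ and $G|_\Sigma\geq -48\pi$ that the paper uses, so that $e^{(F+G)/(16\pi)}\geq e^{-1}$. In fact you supply slightly more detail than the paper does (the observation that the prefactor term in $\partial_r\chi$ drops out on $\Sigma$ because $\chi=0$, and the pointwise bounds $f\geq 8$, $g\geq -12$), and your consistency check against the extreme throat case is accurate.
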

\begin{proof}
Take inequality \eqref{area2} and evaluate it at the particular surface $\Sigma$ mentioned in theorem \ref{teobh}. Note that \eqref{cond1} and \eqref{cond3} restrict the first order radial derivatives of $\sigma$ and $q$, while \eqref{cond2} gives a condition on the second order radial derivative of $\sigma+q$. More precisely, in virtue of \eqref{cond1} and \eqref{cond2} we have
\begin{equation}
F|_\Sigma=\int_\Sigma 4r^ 2\partial_r^ 2(\sigma+q)dS\geq32\pi
\end{equation}
and conditions \eqref{cond1} and \eqref{cond3} give
\begin{equation}
G|_\Sigma=\int_{\Sigma}\left[4r\partial_r(2\sigma+q)+r^2(\partial_r\sigma)^2+r^ 2\frac{(\partial_r\omega)^2}{\eta^2}\right]dS\geq-48\pi.
\end{equation}
Therefore, inequality \eqref{area2} evaluated at $\Sigma$ gives
\begin{equation}\label{area4}
A\geq4\pi e^{\frac{(\mathcal M-8)}{8}},
\end{equation}
where $\mathcal M$ must also be evaluated at $\Sigma$.
\end{proof}

\section{Extreme Kerr is a global minimum of ${\cal M}$}\label{MassMinimum}

The purpose of this section is to show that the extreme Kerr throat initial data is a global minimum of the mass functional \eqref{eq:4}, which completes the proof of theorem \ref{teobh}.

The extreme Kerr throat initial data depend only on one parameter, the angular momentum $J$, and is given by (using a subscript `$0$' to indicate that we refer to this particular case)
\begin{equation}\label{datos}
 \varsigma_0=\ln(4|J|)-\ln(1+\cos^2\theta),\hspace{.5cm}\omega_0=-\frac{8J\cos\theta}{1+\cos^2\theta},\hspace{.5cm}q_0=\ln\frac{1+\cos^2\theta}{2}.
\end{equation}
See \cite{dain10d} for details. Evaluating \eqref{AExt} for these data we have
\begin{equation}
 A_0=8\pi|J|,
\end{equation}
and evaluating \eqref{eq:4}
\begin{equation}
 \mf_0=8(\ln(2|J|)+1).
\end{equation}

We need further properties of the functional $\mf$. On the unit sphere, using $D$ to denote the covariant derivative with respect to the standard metric on $S^2$, this functional takes the form
\begin{equation}\label{defM}
 {\cal M}=\frac{1}{2\pi}\int_{S^2}\left(|D\varsigma|^2+4\varsigma+\frac{|D\omega|^2}{\eta^2}\right)dS,
\end{equation}
where as before $\eta=e^\varsigma\sin^2\theta$. The Euler-Lagrange equations for $\cal M$ are
%\begin{equation}
% \Delta_0\varsigma-2=-\frac{|\partial_\theta\omega|^2}{\eta^2},\hspace{2cm}\Delta_0\omega=2\frac{\partial_\theta\omega\partial_\theta\eta}{\eta}.
%\end{equation}
%In terms of the standard metric on $S^2$, these equations are
\begin{equation}\label{euler}
 D_AD^A\varsigma-2=\frac{D_A\omega D^A\omega}{\eta^2},\hspace{2cm}D_A\left(\frac{D^A\omega}{\eta^2}\right)=0.
\end{equation}
It is important to know that extreme Kerr throat initial data \eqref{datos} satisfy the Euler Lagrange equations \eqref{euler}.

A crucial property of the functional $\cal{M}$ is that it is closely related to the energy associated with a particular harmonic map. To see this, let us restrict the domain of integration in \eqref{defM}, defining
\begin{equation}
 {\cal M}_\Omega=\frac{1}{2\pi}\int_{\Omega}\left(|D\varsigma|^2+4\varsigma+\frac{|D\omega|^2}{\eta^2}\right)dS,
\end{equation}
where $\Omega\subset S^2$, such that $\Omega$ does not include the poles, and consider the functional
\begin{equation}
 \tilde{\cal M}_\Omega=\frac{1}{2\pi}\int_\Omega\frac{|\partial\eta|^2+|\partial\omega|^2}{\eta^2}dS.
\end{equation}
The relation between ${\cal M}_\Omega$ and $\tilde{\cal M}_\Omega$ is given by
\begin{equation}
 \label{relMMt}
 \tilde{\cal M}_\Omega={\cal M}_\Omega+4\int_\Omega\log\sin\theta\,dS+\oint_{\partial\Omega}(4\varsigma+\log\sin\theta)\frac{\partial\log\sin\theta}{\partial n}ds,
 \end{equation}
where $n$ denotes the exterior normal to $\Omega$ and $ds$ is the surface element on the boundary $\partial\Omega$. The second term on the r.h.s. is a non divergent numerical constant, %as can be seen by
%\begin{equation}
% \int_{S^2}\log\sin\theta\,dS=2\log2-2,
%\end{equation}
but the boundary term diverges at the poles.

The functional $\tilde{\cal M}$ defines an energy for maps $(\eta,\omega):S^2\rightarrow \mathbb{H}^2$, where $\mathbb{H}^2$ denotes the hyperbolic plane, that is, $\mathbb{H}^2=\{(\eta,\omega):\eta>0\}$ equipped with the negative constant curvature metric
\begin{equation}
 ds^2=\frac{d\eta^2+d\omega^2}{\eta^2}.
\end{equation}
Solutions to the Euler-Lagrange equations for the energy $\tilde{\cal M}$ are called harmonic maps from $S^2\rightarrow \mathbb{H}^2$. Since ${\cal M}$ and $\tilde{\cal M}$ differ only by a constant and boundary terms, they have the same Euler-Lagrange equations. Relation \eqref{relMMt} will be central in the proof that the global minimum of $\cal{M}$ is attained by extreme Kerr throat initial data. This result is presented in the following lemma.

\begin{lemma}
 \label{lemma:2}
Let $\varsigma$ and $\omega$ regular functions on the sphere such that the functional $\mathcal M$ is finite. Assume also that $\partial_\theta \omega=0$ for $\theta=0,\pi$.  Then
\begin{equation}
 \mf\geq8(\ln(2|J|)+1).
\end{equation}
where $J$ is defined in terms of $\omega$ by \eqref{jotaExt}.
\end{lemma}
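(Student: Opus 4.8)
The plan is to exploit the relation \eqref{relMMt} between $\mathcal M$ and the harmonic-map energy $\tilde{\mathcal M}$, together with the fact that extreme Kerr throat data \eqref{datos} is a harmonic map and that the target $\mathbb H^2$ is a Cartan--Hadamard manifold (nonpositively curved and simply connected). First I would recall that on such targets the energy functional for maps from a fixed domain is convex along geodesic homotopies (in the sense of Schoen--Yau / Eells--Sampson): if $\phi_0,\phi_1$ are two maps and $\phi_t$ is the map sending each point $x$ along the unique geodesic in $\mathbb H^2$ from $\phi_0(x)$ to $\phi_1(x)$, then $t\mapsto E(\phi_t)$ is convex, and strictly convex unless the two maps differ by a parallel motion. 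Since a harmonic map is a critical point of the energy and the energy is convex, any harmonic map is a minimizer of $\tilde{\mathcal M}$ in its ``homotopy class'' relative to the prescribed boundary/pole behavior. Thus $\tilde{\mathcal M}[\varsigma,\omega]\geq \tilde{\mathcal M}[\varsigma_0,\omega_0]$ for all competitors with the same asymptotics at the poles.

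The delicate point is that $\tilde{\mathcal M}$ and the boundary term in \eqref{relMMt} both diverge at the poles, so the comparison cannot be made directly on $S^2$. The plan is therefore to work on $\Omega = S^2 \setminus (D_\delta(0)\cup D_\delta(\pi))$, small geodesic disks of radius $\delta$ around each pole removed, carry out the convexity comparison there, and then let $\delta\to 0$. Here the hypothesis $\partial_\theta\omega = 0$ at $\theta = 0,\pi$ is essential: it forces $\omega \to \omega(0)$ and $\omega\to\omega(\pi)$ fast enough that, along the geodesic homotopy between a general competitor and the extreme Kerr data, the divergent boundary contributions in \eqref{relMMt} from the competitor and from $\varsigma_0,\omega_0$ cancel in the limit, so that
\begin{equation}
\lim_{\delta\to 0}\left(\tilde{\mathcal M}_{\Omega_\delta}[\varsigma,\omega]-\tilde{\mathcal M}_{\Omega_\delta}[\varsigma_0,\omega_0]\right)
=\mathcal M[\varsigma,\omega]-\mathcal M_0 ,
\end{equation}
with $\mathcal M_0 = 8(\ln(2|J|)+1)$. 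Two facts must be checked to make this work: that the relevant boundary terms depend on $\omega$ only through its pole values $\omega(0),\omega(\pi)$ (hence through $J$, by \eqref{jotaExt}), so both maps being compared have the \emph{same} divergent boundary data precisely because they share the same $J$; and that the finite ``$4\int\log\sin\theta$'' constant is common to both and cancels. Then the convexity inequality on $\Omega_\delta$ passes to the limit and yields $\mathcal M[\varsigma,\omega]\geq \mathcal M_0$.

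The main obstacle is the careful bookkeeping of the divergent terms: one must show uniform-in-$\delta$ control of the geodesic homotopy $\phi_t^\delta$ near the poles and that no extra boundary contribution survives the limit beyond what \eqref{relMMt} already encodes. Equivalently, one can phrase the argument directly in terms of $\mathcal M$ using the Carter-type / harmonic-map identity that expresses $\mathcal M[\varsigma,\omega]-\mathcal M_0$ as $\int_{S^2}\big(\text{(nonneg.\ second-order expression)}\big)dS$ plus a total divergence whose integral vanishes under the hypothesis $\partial_\theta\omega|_{0,\pi}=0$; this is the ``mass formula'' underlying \cite{dain10d}. The positivity of the bulk term is exactly the statement that $\mathbb H^2$ has nonpositive curvature, and the vanishing of the boundary term is exactly the role of the pole hypothesis on $\omega$. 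I expect the harmonic-map convexity route to be the cleanest, with essentially all the work going into the pole analysis; the fact that equality holds for extreme Kerr follows since $(\varsigma_0,\omega_0)$ is itself the harmonic map and realizes $\mathcal M_0$.
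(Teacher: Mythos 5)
Your overall strategy --- relate $\mathcal M$ to the harmonic energy $\tilde{\mathcal M}$ via \eqref{relMMt}, use that extreme Kerr throat data is a harmonic map into $\mathbb H^2$ and that negatively curved targets make harmonic maps energy minimizers, and then handle the poles by a limiting argument --- is exactly the strategy of the paper (which invokes the Hildebrandt--Kaul--Widman theorem where you invoke convexity along geodesic homotopies; these are interchangeable here). But there is a genuine gap at the central step. On $\Omega_\delta=S^2\setminus(D_\delta(0)\cup D_\delta(\pi))$ the competitor $(\varsigma,\omega)$ and the Kerr data $(\varsigma_0,\omega_0)$ do \emph{not} share Dirichlet data on $\partial\Omega_\delta$, not even asymptotically as $\delta\to0$: sharing the same $J$ together with $\partial_\theta\omega|_{0,\pi}=0$ forces $\omega=\omega_0+O(\sin^2\theta)$ at the axis, but nothing in the hypotheses forces $\varsigma$ to approach $\varsigma_0(0)=\ln(2|J|)$ at the poles. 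Consequently neither uniqueness of the Dirichlet minimizer nor convexity of $t\mapsto E(\phi_t)$ gives $\tilde{\mathcal M}_{\Omega_\delta}[\varsigma,\omega]\geq\tilde{\mathcal M}_{\Omega_\delta}[\varsigma_0,\omega_0]$: the first variation of the energy at the harmonic map along your geodesic homotopy carries a boundary term
\begin{equation}
\oint_{\partial\Omega_\delta}\langle \partial_n\phi_0,\,V\rangle\,ds ,
\end{equation}
which does not vanish as $\delta\to0$. Indeed the normal derivative of the Kerr map measured in the $\mathbb H^2$ metric behaves like $|\partial_\theta\omega_0|/\eta_0\sim 1/\sin\theta$ while the circumference of $\partial\Omega_\delta$ is only $O(\sin\delta)$, so this term is $O(1)\cdot|V|$, and $|V|$ (the hyperbolic distance between the two maps on $\partial\Omega_\delta$) stays bounded away from zero whenever $\varsigma(0)\neq\ln(2|J|)$. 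Your claim that the divergent contributions ``cancel because both maps share the same $J$'' controls only the $\omega$-part, not the $\varsigma$-part, and it is the latter that obstructs the direct comparison.

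The paper's proof supplies precisely the ingredient your plan is missing: instead of comparing $(\varsigma,\omega)$ with $(\varsigma_0,\omega_0)$ directly on a shrinking annular domain, it builds an interpolated competitor $u_\epsilon=\chi_\epsilon u+(1-\chi_\epsilon)u_0$ that \emph{coincides} with $u_0$ in a neighborhood $\Omega_I$ of the poles, so that on $\Omega_{IV}$ the two maps genuinely have equal boundary values and the Dirichlet minimization argument is legitimate, yielding $\mathcal M^\epsilon\geq\mathcal M^0$. The entire analytic burden then shifts to proving $\mathcal M^\epsilon\to\mathcal M$, which requires the specific logarithmic cutoff of Li--Tian with $\lim_{\epsilon\to0}\int_0^\infty|\partial_\rho\chi_\epsilon|^2\rho\,d\rho=0$ (an ordinary linear cutoff would not do), plus the hypothesis $\partial_\theta\omega|_{0,\pi}=0$ to bound $|\omega-\omega_0|^2/\sin^4\theta$ in the transition region. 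Without this cut-and-paste construction, or an equally careful replacement that controls the surviving boundary term, your argument does not close.
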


It is important to remark that the condition $\partial_\theta \omega=0$ for $\theta=0,\pi$ is automatically satisfied for smooth initial data, as a consequence of the regularity at the axis.
\begin{proof}
The proof follows similar arguments as those used in \cite{Costa:2009hn} and \cite{Chrusciel:2007ak}.

The core of the proof is the use of a theorem due to Hildebrandt, Kaul and
Widman \cite{Hildebrandt77} for harmonic maps. In that work it is shown that if
the domain for the map is compact, connected, with nonvoid boundary and the
target manifold has negative sectional curvature, then minimizers of the
harmonic energy with Dirichlet boundary conditions exist, are smooth, and
satisfy the associated Euler-Lagrange equations. That is, harmonic maps are
minimizers of the harmonic energy for given Dirichlet boundary
conditions. Also, solutions of the Dirichlet boundary value problem are unique
when the target manifold has negative sectional curvature. Therefore, we want
to use the relation between ${\cal M}$ and the harmonic energy $\tilde{\cal M}$
in order to prove that minimizers of $\tilde{\cal M}$ are also minimizers of
${\cal M}$. There are two main difficulties in doing this. First, the harmonic
energy $\tilde{\cal M}$ is not defined for the functions that we are
considering if the domain of integration includes the poles. Second, we are not
dealing with a Dirichlet problem. To overcome this difficulties the sphere is
split in three regions according to figure \ref{regions}. The extent of the
different regions depend on a chosen positive constant $\epsilon$, in such a
way that when $\epsilon$ goes to zero regions $\Omega_I$ and $\Omega_{II}$
shrink towards the poles, while region $\Omega_{III}$ extends towards covering
the sphere. Then a partition function is used to interpolate between extreme
Kerr throat initial data in region $\Omega_I$ and general extreme throat
initial data in region $\Omega_{III}$, constructing auxiliary interpolating
data. This solves the two difficulties in the sense that now the Dirichlet
problem on region $\Omega_{IV}=\Omega_{II}\cup\Omega_{III}$ can be considered,
and the harmonic energy is well defined for this domain of integration. This
allows us to show that the mass functional for Kerr data is less than or equal
to the mass functional for the auxiliary interpolating data in the hole
sphere. The final step is to show that as $\epsilon$ goes to zero the mass
functional for the auxiliary data converges to the mass functional for the
original general data. There is a subtlety in this step. Namely, the partition
function needs to have been chosen suitably in order for the convergence to be
possible.

\begin{figure}[h]
 \centering
 \includegraphics[width=4cm]{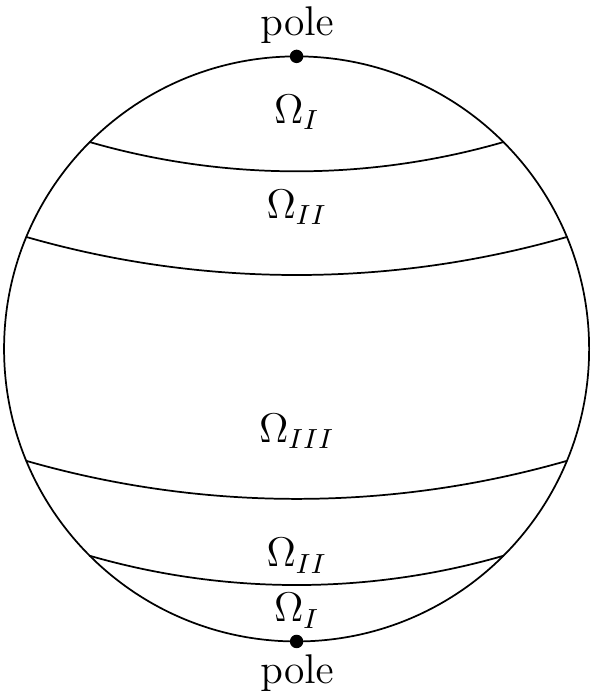}
 \caption{The different regions in which the sphere is split.}
 \label{regions}
\end{figure}

The partition function that will be used and the extension of the regions in
which the sphere is split are closely related. Therefore we start presenting
the partition function, taken from \cite{Li92}, Lemma 3.1, and then define the
different regions on the sphere. So, let $\chi:\mathbb{R}\rightarrow\mathbb{R}$
be a cut off function such that $\chi\in C^\infty(\mathbb{R})$,
$0\leq\chi\leq1$, $\chi(t)=1$ for $t\leq1$, $\chi(t)=0$ for $2\leq t$ and
$\left|\frac{d\chi}{dt}\right|\leq1$. For given $\epsilon>0$ define
\begin{equation}
 t_\epsilon(\rho)=\frac{\log(-\log\rho)}{\log(-\log\epsilon)},
\end{equation}
and
\begin{equation}
 \chi_\epsilon(\rho)=\chi(t_\epsilon(\rho)).
\end{equation}
Then $\chi_\epsilon$ defines a smooth function for $0<\epsilon<1$ and
$0\leq\rho<\infty$ (the function is trivially extended to be $1$ when
$\rho\geq1$). Also, $\chi_\epsilon(\rho)=0$ for $\rho\leq
e^{-(\log\epsilon)^2}$ and $\chi_\epsilon(\rho)=1$ for $\rho\geq\epsilon$, so
it has all the properties of a partition function. Of particular importance for
us is that $\chi_\epsilon$ has the following property,
\begin{equation}\label{intChi}
 \lim_{\epsilon\rightarrow0}\int_0^\infty|\partial_\rho\chi_\epsilon|^2\rho d\rho=0.
\end{equation}

Accordingly to the definition of $\chi_\epsilon$ we define the following regions on the sphere
\begin{eqnarray}
&& \Omega_I = \{\sin\theta\leq e^{-(\log\epsilon)^2}\},\\
&& \Omega_{II} = \{e^{-(\log\epsilon)^2}\leq\sin\theta\leq\epsilon\},\\
&& \Omega_{III} = \{\epsilon\leq\sin\theta\},\\
&& \Omega_{IV} = \Omega_{II}\cup\Omega_{III}.
\end{eqnarray}

Now we can define the interpolating functions. For this, let $u$ represent any of $(\varsigma,\omega)$, the general data, and let $u_0$ represent any of $(\varsigma_0,\omega_0)$, corresponding to the extreme Kerr throat data with the same angular momentum $J$ of $\omega$. We define $u_\epsilon$ to be
\begin{equation}
\label{uepsilon}
u_\epsilon=\chi_\epsilon(\sin\theta)\,u+(1-\chi_\epsilon(\sin\theta))\,u_0.
\end{equation}
This gives $u_\epsilon|_{\Omega_I}=u_0|_{\Omega_I}$ and $u_{\epsilon}|_{\Omega_{III}}=u|_{\Omega_{III}}$ as desired. We also define the mass functional for these functions
\begin{equation}
\label{Mepsilon}
{\cal
  M}^\epsilon=\frac{1}{2\pi}\int_{S^2}\left(|D\varsigma_\epsilon|^2+4\varsigma_\epsilon+\frac{|D\omega_\epsilon|^2}{e^{2\varsigma_\epsilon}\sin^4\theta}\right)dS,   
\end{equation}
and correspondingly ${\cal M}^\epsilon_\Omega$ and $\tilde{\cal
  M}^\epsilon_\Omega$ when the domain of integration is restricted to some region $\Omega$ or when we are considering the harmonic energy for the given map. We also denote by a superscript `$0$' these quantities
calculated for $u_0$.

We have all the ingredients needed to make use of the result of \cite{Hildebrandt77}. For this, let us consider now a fixed value of $\epsilon$, and the functions
$(\varsigma,\omega)$ on the set $\Omega_{IV}$. By \cite{Hildebrandt77} we know that there exists one and only one function that minimizes $\tilde{\cal M}$ on $\Omega_{IV}$ for
given boundary data, and that this function satisfies the Euler-Lagrange equations of $\tilde{\cal M}$ on $\Omega_{IV}$. By construction of $u_\epsilon$ we have that $u_\epsilon$ and $u_0$ have the same boundary values on $\Omega_{IV}$,
\begin{equation}
u_\epsilon|_{\partial\Omega_{IV}}=u_0|_{\partial\Omega_{IV}}.
\end{equation}
As we already know that $u_0$ is a solution of the Euler-Lagrange equations
of $\tilde{\cal M}$ there, then $u_0$ is the only minimizer of $\tilde{\cal M}$ on $\Omega_{IV}$ with these boundary conditions. With respect to $u_\epsilon$ this means that
\begin{equation}\label{desMepsilonMtalpha}
\tilde{\cal M}^\epsilon_{\Omega_{IV}}\geq\tilde{\cal M}^0_{\Omega_{IV}}.
\end{equation}
Both ${\cal M}$ and $\tilde{\cal M}$ are well defined on $\Omega_{IV}$, and by \eqref{relMMt} their difference is just a constant. This allows us to use \eqref{desMepsilonMtalpha} to get that
\begin{equation}\label{desMeps}
{\cal M}^\epsilon_{\Omega_{IV}}\geq{\cal M}^0_{\Omega_{IV}}.
\end{equation}
As we have already noted, $u_\epsilon|_{\Omega_{I}}=u_0|_{\Omega_{I}}$, and therefore ${\cal M}^\epsilon_{\Omega_I}={\cal M}^0_{\Omega_I}$. This together with \eqref{desMeps} and the fact that $S^2=\Omega_I\cup\Omega_{IV}$ gives
\begin{equation}\label{ineqMe}
{\cal M}^\epsilon\geq{\cal M}^0.
\end{equation}

Only the last step of the proof is lacking, that is, to show that
\begin{equation}\label{limitMe}
\lim_{\epsilon\rightarrow 0}{\cal M}^\epsilon={\cal M}.
\end{equation}
To do this we start by splitting the integral in \eqref{Mepsilon} according to the different domains of integration $\Omega_I$, $\Omega_{II}$ and $\Omega_{III}$. From the definition of $u_\epsilon$ \eqref{uepsilon} we have
\begin{equation}\label{Duepsilon}
 Du_\epsilon=D(\chi_\epsilon(\sin\theta))(u-u_0)+\chi_\epsilon(\sin\theta) Du+(1-\chi_\epsilon(\sin\theta))Du_0,
\end{equation}
and in particular $Du_\epsilon|_{\Omega_I}=Du_0|_{\Omega_I}$ and $Du_\epsilon|_{\Omega_{III}}=Du|_{\Omega_{III}}$. Then
\begin{eqnarray}
  {\cal M}^\epsilon  & = & \frac{1}{2\pi}\int_{\Omega_{I}}\left(|D\varsigma_0|^2+4\varsigma_0+\frac{|D\omega_0|^2}{e^{2\varsigma_0}\sin^4\theta}\right)dS\\
\label{intOm2} &&+\frac{1}{2\pi}\int_{\Omega_{II}}\left(|D\varsigma_\epsilon|^2+4\varsigma_\epsilon+\frac{|D\omega_\epsilon|^2}{e^{2\varsigma_\epsilon}\sin^4\theta}\right)dS\\
\label{intOm3} &&+\frac{1}{2\pi}\int_{\Omega_{III}}\left(|D\varsigma|^2+4\varsigma+\frac{|D\omega|^2}{e^{2\varsigma}\sin^4\theta}\right)dS.
\end{eqnarray}
The first and third integrals are not hard to deal with. As ${\cal M}_0$ is finite, and as $\Omega_I$ shrinks to a point as $\epsilon$ goes to zero, then
\begin{equation}
 \lim_{\epsilon\rightarrow 0}\left[\frac{1}{2\pi}\int_{\Omega_{I}}\left(|D\varsigma_0|^2+4\varsigma_0+\frac{|D\omega_0|^2}{e^{2\varsigma_0}\sin^4\theta}\right)dS\right]=0
\end{equation}
by the Lebesgue-dominated convergence theorem\footnote{The Lebesgue-dominated convergence theorem is usually stated in terms of a pointwise convergent sequence of functions, all of which are dominated by an integrable function. In our case, the functions are not changing as we take the limit, but the domain of integration itself is changing. For using the Lebesgue-dominated convergence theorem we can construct a sequence of functions, which are defined by taking the value of the original function in the domain that we are integrating, and zero outside this domain. So now we can keep the domain of integration fixed and apply the theorem in its most common form.}. Also, as ${\cal M}$ is finite, and as $\Omega_{III}$ extends to cover $S^2$ as $\epsilon$ goes to zero, then by the dominated convergence theorem we have that
\begin{eqnarray}
&& \lim_{\epsilon\rightarrow 0}\left[\frac{1}{2\pi}\int_{\Omega_{III}}\left(|D\varsigma|^2+4\varsigma+\frac{|D\omega|^2}{e^{2\varsigma}\sin^4\theta}\right)dS\right]=\\
&& \hspace{2cm}=\frac{1}{2\pi}\int_{S^2}\left(|D\varsigma|^2+4\varsigma+\frac{|D\omega|^2}{e^{2\varsigma}\sin^4\theta}\right)dS={\cal M}.
\end{eqnarray}

As can be expected, the integral over $\Omega_{II}$ \eqref{intOm2} is more tricky and we need to consider its different parts separately. Using the definition of $u_\epsilon$ \eqref{uepsilon} we have that $|u_\epsilon|\leq|u|+|u_0|$. Therefore
\begin{equation}
 \left|\int_{\Omega_{II}}\varsigma_\epsilon dS \right|\leq\int_{\Omega_{II}}|\varsigma_\epsilon|dS\leq\int_{\Omega_{II}}(|\varsigma|+|\varsigma_0|)dS,
\end{equation}
and by the Lebesgue-dominated convergence theorem
\begin{equation}
 \lim_{\epsilon\rightarrow 0}\int_{\Omega_{II}}(|\varsigma|+|\varsigma_0|)dS=0,
\end{equation}
so we get that
\begin{equation}
 \lim_{\epsilon\rightarrow 0}\left|\int_{\Omega_{II}}\varsigma_\epsilon dS\right|=0.
\end{equation}

From \eqref{Duepsilon} we have
\begin{equation}
 |Du_\epsilon|\leq|D(\chi_\epsilon(\sin\theta))|\,|u-u_0|+|Du|+|Du_0|,
\end{equation}
then
\begin{equation}\label{boundDu2}
 |Du_\epsilon|^2\leq3\left(|D(\chi_\epsilon(\sin\theta))|^2\,|u-u_0|^2+|Du|^2+|Du_0|^2\right).
\end{equation}
For the first term in \eqref{intOm2} this gives
\begin{eqnarray}
\label{intPart1} \int_{\Omega_{II}}|D\varsigma_\epsilon|^2dS & \leq & 3\bigg(\int_{\Omega_{II}}|D(\chi_\epsilon(\sin\theta))|^2\,|\varsigma-\varsigma_0|^2dS\\
&& +\int_{\Omega_{II}}|D\varsigma|^2dS+\int_{\Omega_{II}}|D\varsigma_0|^2dS\bigg).
\end{eqnarray}
The last two terms in the integral above converge to zero when $\epsilon$ goes to zero by the Lebesgue-dominated convergence theorem as $\cal{M}$ and ${\cal{M}}_0$ are finite. For the first term we have
\begin{equation}\label{est1}
 \int_{\Omega_{II}}|D(\chi_\epsilon(\sin\theta))|^2\,|\varsigma-\varsigma_0|^2dS\leq\sup_{S^2}\left((|\varsigma|+|\varsigma_0|)^2\right)\int_{\Omega_{II}}|D(\chi_\epsilon(\sin\theta))|^2dS.
\end{equation}
Making the change of variable $\rho=\sin\theta$ and integrating over the $\phi$ variable we have that
\begin{equation}\label{intPar}
 \int_{\Omega_{II}}|D(\chi_\epsilon(\sin\theta))|^2dS\leq2\pi\int_{e^{-(\log\epsilon)^2}}^\epsilon|\partial_\rho(\chi_\epsilon(\rho))|^2\rho d\rho.
\end{equation}
Here is where the important property \eqref{intChi} is used, together with \eqref{intPar}, \eqref{est1} and \eqref{intPart1} gives
\begin{equation}
 \lim_{\epsilon\rightarrow 0}\int_{\Omega_{II}}|D\varsigma_\epsilon|^2dS=0.
\end{equation}

The last integral that needs consideration is
\begin{equation}
\int_{\Omega_{II}}\frac{|D\omega_\epsilon|^2}{e^{2\varsigma_\epsilon}\sin^4\theta}dS.
\end{equation}
As $\varsigma$ and $\varsigma_0$ are bounded, then also is $\varsigma_\epsilon$  and therefore there exists a constant $C$ such that $e^{-2\varsigma_\epsilon}\leq C$, and then using \eqref{boundDu2}
\begin{eqnarray}
 \int_{\Omega_{II}}\frac{|D\omega_\epsilon|^2}{e^{2\varsigma_\epsilon}\sin^4\theta}dS & \leq &  3C\bigg(\int_{\Omega_{II}}\frac{|D(\chi_\epsilon(\sin\theta))|^2|\omega-\omega_0|^2}{\sin^4\theta}dS\\
&& +\int_{\Omega_{II}}\frac{|D\omega|^2}{\sin^4\theta}dS+\int_{\Omega_{II}}\frac{|D\omega_0|^2}{\sin^4\theta}dS\bigg).
\end{eqnarray}
As before, the last two terms in the integral converge to zero when $\epsilon$
goes to zero by the Lebesgue-dominated convergence theorem as $\cal{M}$ and
${\cal{M}}_0$ are finite. For the first term we use the hypothesis
$\partial_\theta \omega|_{\theta=0,\pi}=0$ and the fact that by construction
$\omega|_{\theta=0,\pi}=\omega_0|_{\theta=0,\pi}$. Making a Taylor expansion of
$\omega$, what we have just said translates into
$\omega=\omega_0+O(\sin^2\theta)$ near the axis. This means that
$\frac{|\omega-\omega_0|^2}{\sin^4\theta}$ is bounded, and together with
\eqref{intPar} and \eqref{intChi} we conclude that
\begin{equation}
 \lim_{\epsilon\rightarrow0}\int_{\Omega_{II}}\frac{|D\omega_\epsilon|^2}{e^{2\varsigma_\epsilon}\sin^4\theta}dS=0.
\end{equation}
This completes the proof of \eqref{limitMe} and with \eqref{ineqMe} the proof of the lemma.
\end{proof}

\label{conclusiones}
\section*{Acknowledgements}
Most of this work took place during the visit of A. A. to FaMAF, UNC, in 2010. He
thanks  for the hospitality and support of  this institution.

S. D. is supported by CONICET (Argentina). M. E. G. C. is supported by a
fellowship of CONICET (Argentina). This work was supported in part by grant PIP
6354/05 of CONICET (Argentina), grant Secyt-UNC (Argentina) and the
Partner Group grant of the Max Planck Institute for Gravitational Physics,
Albert-Einstein-Institute (Germany).

\appendix

%\bibliographystyle{plain}
%\bibliography{biblio}

\end{document}